\long\def\symbolfootnote[#1]#2{\begingroup
\def\thefootnote{\fnsymbol{footnote}}\footnote[#1]{#2}\endgroup}
\numberwithin{equation}{section}
\newenvironment{proofThm}[1][Proof of Theorem \ref{relValueThm}.]{\begin{trivlist}
\item[\hskip \labelsep {\bfseries #1}]}{\hfill\qed\end{trivlist}}
\theoremstyle{plain}
\newtheorem{thm}{Theorem}[section]
\newtheorem{cor}[thm]{Corollary}
\newtheorem{ass}[thm]{Assumption}
\renewcommand{\baselinestretch}{1.3}
\newcommand\blfootnote[1]{%
  \begingroup
  \renewcommand\thefootnote{}\footnote{#1}%
  \addtocounter{footnote}{-1}%
  \endgroup
}
\def\F{\CMcal{F}} 
\def\O{\Omega}
\def\o{\omega}
\def\L{\Lambda}
\def\p{\pi}
\def\R{{\mathbb R}}
\def\ito{It{\^o}}
\def\intt{\int_0^t}
\def\sgn{{\rm sgn}}
\def\ep{\epsilon}
\begin{document}

\centerline{\bf \LARGE{The Rank Effect}}



\blfootnote{The views expressed in this paper are those of the authors alone and do not necessarily reflect the views of the Federal Reserve Bank of Dallas or the Federal Reserve System. All remaining errors are our own.}

\vskip 30pt

\centerline{\large{Ricardo T. Fernholz\symbolfootnote[2]{Claremont McKenna College, 500 E. Ninth St., Claremont, CA 91711, rfernholz@cmc.edu} \hskip 90pt Christoffer Koch\symbolfootnote[3]{Federal Reserve Bank of Dallas, 2200 North Pearl Street, Dallas, TX 75201, christoffer.koch@dal.frb.org}}}

\vskip 3pt

\centerline{\hskip 17pt Claremont McKenna College \hskip 57pt Federal Reserve Bank of Dallas}

\vskip 35pt


\vskip 5pt

\centerline{\large{\today}}

\vskip 50pt

\renewcommand{\baselinestretch}{1.1}
\begin{abstract}
We decompose returns for portfolios of bottom-ranked, lower-priced assets relative to the market into rank crossovers and changes in the relative price of those bottom-ranked assets. This decomposition is general and consistent with virtually any asset pricing model. Crossovers measure changes in rank and are smoothly increasing over time, while return fluctuations are driven by volatile relative price changes. Our results imply that in a closed, dividend-free market in which the relative price of bottom-ranked assets is approximately constant, a portfolio of those bottom-ranked assets will outperform the market portfolio over time. We show that bottom-ranked relative commodity futures prices have increased only slightly, and confirm the existence of substantial excess returns predicted by our theory. If these excess returns did not exist, then top-ranked relative prices would have had to be much higher in 2018 than those actually observed --- this would imply a radically different commodity price distribution.
\end{abstract}
\renewcommand{\baselinestretch}{1.3}

\vskip 50pt

JEL codes: G11, G12, G14, C14

Keywords: Asset pricing, asset pricing anomalies, asset pricing factors, efficient markets, commodity prices, nonparametric methods

\vfill

\pagebreak

\section{Introduction}

Consider a market consisting of multiple assets. Suppose that we divide those assets into a top-ranked half consisting of the most expensive assets and a bottom-ranked half consisting of the least expensive assets. Two facts emerge. First, in the absence of dividends, the return for the bottom-ranked half of assets relative to the top-ranked half will depend on two factors: the change in the price of bottom-ranked assets relative to top-ranked assets and the number of assets that crossover from bottom half to top half. Second, if the relative price of bottom-ranked versus top-ranked assets is constant over time, then the crossovers will ensure that the bottom-ranked half of assets deliver a higher return than the top-ranked half of assets --- a \emph{rank effect}. These two facts amount to accounting identities, and we explore their implications.

Using novel mathematical methods in which we represent asset prices as general continuous semimartingales, we show that returns for portfolios of ranked asset subsets relative to the market can be decomposed into rank crossovers and changes in the relative price of ranked asset subsets. Specifically, we characterize relative returns as
\begin{equation} \label{intuitiveEq}
 \text{relative return} \;\; = \;\; \text{rank crossovers} \; +  \begin{array}{c} \text{change in relative price of} \\[-0.2cm] \text{ranked asset subsets} \end{array}, 
\end{equation}
where the relative prices of ranked asset subsets are volatile and drive relative return fluctuations. Rank crossovers, in contrast, are roughly constant, and non-negative for portfolios of bottom-ranked, lower-priced assets and non-positive for portfolios of top-ranked, higher-priced assets. The decomposition \eqref{intuitiveEq} is achieved using few assumptions about the dynamics of individual asset prices, which means our results are sufficiently general that they apply to almost any equilibrium asset pricing model, in a sense made precise in Section \ref{sec:theory} below. Indeed, the decomposition \eqref{intuitiveEq} is little more than an accounting identity that is approximate in discrete time and exact in continuous time.

This generality is one of the strengths of our unconventional approach. The continuous semimartingales we use to represent asset prices allow for a practically unrestricted structure of time-varying dynamics and co-movements that are consistent with the endogenous price dynamics of any economic model. In this manner, our framework is applicable to both rational \citep{Sharpe:1964,Lucas:1978,Cochrane:2005} and behavioral \citep{Shiller:1981,DeBondt/Thaler:1989} theories of asset prices, as well as to the many econometric specifications of asset pricing factors identified in the empirical literature.

Our results have implications for the efficiency of asset markets. We show that in a market in which dividends and the entry and exit of assets over time play negligible roles, the non-negativity of rank crossovers in \eqref{intuitiveEq} implies that portfolios of bottom-ranked, lower-priced assets must necessarily outperform the market except in the special case where the relative price of those bottom-ranked assets falls at a sufficiently fast rate over time. Thus, in order to rule out rank effect predictable excess returns, the relative price of lower-priced assets must be decreasing on average over time at a rate fast enough to overwhelm the predictable positive rank crossovers. This result re-casts market efficiency in terms of a constraint on the dynamic behavior of ranked relative asset prices. Through this lens, our approach provides a novel mechanism to uncover risk factors or inefficiencies that persist across a variety of different asset markets.

The general mathematical approach we use to derive the decomposition \eqref{intuitiveEq} is similar to \citet{Fernholz/Stroup:2018}. Like these authors, we are able to characterize a class of asset pricing factors that are universal across different economic models and econometric specifications by tying the volatility of relative portfolio returns to changes in the relative price of ranked asset subsets. In this way, the decomposition \eqref{intuitiveEq} is naturally immune to recent criticisms of the implausibly high number of factors and anomalies uncovered by the empirical asset pricing literature \citep{Novy-Marx:2014,Harvey/Liu/Zhu:2016,Bryzgalova:2016}. 

We test our theoretical predictions using commodity futures. This market provides a clear test of our theory, since commodity futures contracts do not pay dividends and rarely exit from the market. Although dividends and asset entry/exit can be incorporated into our framework and do not overturn the basic insight of the decomposition \eqref{intuitiveEq}, they do alter and complicate the form of our results. Our empirical analysis focuses on simple price-weighted portfolios of bottom- and top-ranked subsets of lower- and higher-priced commodities. We decompose the relative returns of these bottom- and top-ranked commodity futures portfolios from 1974-2018 as in \eqref{intuitiveEq}, with the market portfolio defined as the price-weighted portfolio that holds one unit of each commodity futures contract. Empirically, we show that the relative price of bottom-ranked commodity futures declined only slightly over the forty year period we study. Consequently, and as predicted by the theory, the bottom-ranked portfolio exhibits positive long-run returns relative to the market driven by accumulating positive rank crossovers, while the top-ranked portfolio exhibits negative long-run returns relative to the market driven by accumulating negative rank crossovers. The bottom-ranked portfolio of lower-priced commodity futures consistently and substantially outperforms the price-weighted market portfolio, with excess returns that have Sharpe ratios of 0.6-0.8 in most decades.

This rank effect is similar to the value anomaly for commodity futures described by \citet{Asness/Moskowitz/Pedersen:2013}. These authors rank commodities based on their current price relative to their past prices, and show that portfolios consisting of high-value commodity futures --- those contracts with low prices today relative to the past --- consistently and substantially outperform portfolios consisting of low-value commodity futures --- those contracts with high prices today relative to the past. Given its similarity to our results, the value anomaly for commodity futures can thus be interpreted in terms of the decomposition \eqref{intuitiveEq}. This decomposition implies that there is no rank effect for commodities, and hence no value anomaly for commodities as well, only if the relative price of bottom-ranked commodities consistently and rapidly declines. In order to investigate this possibility, we perform counterfactuals in which we impose such a rapid relative price decline. We find that any scenario in which there is no rank effect or value anomaly implies a radically different set of relative commodity futures prices in 2018. In particular, bottom- and top-ranked commodity prices would have had to be orders of magnitude farther apart than they were in 2018. This finding yields a novel interpretation of the results of \citet{Asness/Moskowitz/Pedersen:2013}.

It is important to emphasize that the mathematical methods we use to derive our results are well-established and the subject of active research in statistics and mathematical finance. Our results and methods are most similar to \citet{Karatzas/Ruf:2017}, who provide a return decomposition similar to \eqref{intuitiveEq}. Their results are based on the original characterization of \citet{Fernholz:2002}, which led to subsequent contributions by \citet{Banner/Fernholz/Karatzas:2005}, \citet{Pal/Pitman:2008}, and \citet{Fernholz:2017a}, among others. These contributions focus primarily on the solutions of stochastic differential equations and the link between rank crossovers and the relative growth rates and values of different ranked processes. Our results, in contrast, are interpreted in an economic setting that focuses on questions of asset pricing risk factors and market efficiency. Ours is also the first paper to empirically examine these results in the commodity futures market, which, as discussed above, most closely aligns with the assumptions that underlie our theoretical results.

Our results raise questions regarding the implications of equilibrium asset pricing models for relative prices of different ranked assets. Since our results show that the relative prices of different ranked asset subsets operate as a universal risk factor, different models' predictions for these relative prices becomes a major question of interest. In particular, our results imply that ranked relative prices should be linked to an endogenous stochastic discount factor that in equilibrium is linked to the marginal utility of economic agents. It is not obvious what economic and financial forces might underlie such a link, however. Nonetheless, our paper shows that unless the relative price of bottom-ranked, lower-priced assets is consistently and rapidly falling over time, such links must necessarily exist.

\vskip 50pt

\section{A Simple Example} \label{sec:example}

Suppose that the economy consists of $N > 1$ assets whose \emph{individual} prices at time $t$ are denoted by $p_i(t)$, $i = 1, \ldots, N$. We wish to rank these individual asset prices from most to least expensive. Let $p_{(k)}(t)$ denote the price of the $k$-th \emph{ranked} (most expensive) asset at time $t$, so that
\begin{equation}
 p_{(1)}(t) \geq p_{(2)}(t) \geq \cdots \geq p_{(N)}(t).
\end{equation}
We rank assets based on their individual prices relative to each other. The distinction between individual and ranked assets is important. Individual assets occupy different ranks over time since their prices change relative to each other in response to various shocks. A ranked asset refers to the individual asset that occupies a specific rank at a moment in time.


For any $1 \leq m < N$, denote the top $m$ ranked assets at time $t$ by
\begin{equation} \label{topSub}
 P_{\text{top}}(t) = \left\{p_{(1)}(t), \ldots, p_{(m)}(t)\right\},
\end{equation}
and the bottom $N-m$ ranked assets at time $t$ by
\begin{equation} \label{botSub}
 P_{\text{bot}}(t) = \left\{p_{(m+1)}(t), \ldots, p_{(N)}(t)\right\}.
\end{equation}
From period $t$ to period $t+1$, each of the \emph{individual} assets in the top-ranked subset \eqref{topSub} can either stay in that subset or crossover into the bottom-ranked subset \eqref{botSub}. More precisely, for each individual $p_i(t) \in P_{\text{top}}(t)$, either $p_i(t + 1) \in P_{\text{top}}(t+1)$ or $p_i(t + 1) \in P_{\text{bot}}(t+1)$. Similarly, each of the \emph{individual} assets in the bottom-ranked subset in period $t$ can either stay in that subset or crossover into the top-ranked subset. More precisely, for each individual $p_i(t) \in P_{\text{bot}}(t)$, either $p_i(t + 1) \in P_{\text{bot}}(t+1)$ or $p_i(t + 1) \in P_{\text{top}}(t+1)$.


If the bottom- and top-ranked asset subsets grow at similar rates over time, then rank crossovers between the two subsets ensure a higher return for the bottom-ranked assets relative to the top-ranked assets on average over time. To see why, note that the individual assets that drop out of the top subset grow more slowly than the individual assets that rise out of the bottom subset. This simple observation implies that the individual assets in the top subset must grow more slowly than the individual assets in the bottom subset on average over time. Thus, if the two ranked asset subsets grow at similar rates over time, then rank crossovers ensure that the individual assets in the bottom-ranked subset generate higher returns than the individual assets in the top-ranked subset --- a rank effect emerges.

This argument is depicted in Figure \ref{simpleFig}. On average, the prices of the top- and bottom-ranked subsets remain stable relative to each other from time $t$ to time $t+1$ (arrows a and b). However, because some individual assets crossover from the top subset to the bottom subset (arrow c) while some individual assets crossover from the bottom subset to the top subset (arrow d) over this same period, the return of the individual assets in the bottom subset from time $t$ to time $t+1$ will be higher on average than the return of the individual assets in the top subset from time $t$ to time $t+1$. Although this intuitive argument is purposely illustrative and informal, we demonstrate its validity in a rigorous framework in the next section.

Of course, this intuitive argument for the rank effect relies on stability in the prices of bottom-ranked assets relative to top-ranked assets over time. If top-ranked assets consistently rise in price relative to bottom-ranked assets, then the rank effect may disappear. Consistent relative price gains are unlikely to last for long time periods, however, since these gains would eventually lead to the bottom-ranked assets disappearing relative to the top ranks. Thus, while the relative prices of different ranked asset subsets may fluctuate substantially over time, it seems less likely that these relative prices will consistently trend up or down over long periods. In Section \ref{sec:empirics}, we confirm that the relative prices of ranked asset subsets of commodity futures from 1974-2018 are approximately stable in this way, as expected.

The possibility that assets may enter and exit the market over time presents another important caveat. The intuitive argument presented in this section relies on a fixed set of the same $N$ assets that do not change over time. In the real world, however, assets may disappear from a market for a variety of reasons including bankruptcy and loss of interest from investors. If the rate of asset exit from the market is sufficiently large, then this may lower the returns of lower-ranked assets --- those that are most likely to exit --- enough to offset the rank effect. Ultimately, whether or not asset entry and exit is sufficiently impactful to negate the rank effect is an empirical question that depends on the specific market being considered. We choose to focus on the commodity futures market for our empirical analysis in Section \ref{sec:empirics} in part because assets rarely exit from this market.

Finally, the simple logic behind the rank effect depicted in Figure \ref{simpleFig} is most salient for markets of assets that do not pay dividends. In the absence of dividends, price change capital gains entirely determine returns, which is central to this section's intuitive explanation. In the presence of dividends, it is possible that bottom-ranked assets pay lower dividends than top-ranked assets. If this differential rate of dividend payment is large enough, then the rank effect may disappear even though the earlier argument about rank crossovers and relative prices remains valid. In such a scenario, bottom-ranked assets would still generate higher capital gains than top-ranked assets, as depicted in Figure \ref{simpleFig}, but this differential rate of capital gains would be negated by the relatively higher dividends paid by top-ranked assets. In this case, there is a rank effect for capital gains, but different dividend rates imply no rank effect for returns.

In the proceeding sections, we show that the the simple logic depicted in Figure \ref{simpleFig} holds in a rigorous mathematical framework. We then examine commodity futures prices and show that from 1974-2018 there has been a large and consistent rank effect in which portfolios of bottom-ranked assets outperform portfolios of top-ranked assets. In line with the preceding discussion, the decision to focus on commodity futures is primarily motivated by the fact that these assets do not pay dividends and only rarely exit from the market. Indeed, no such exits occur over the 1974-2018 time period we consider. In this way, the choice of commodity futures aligns our empirical analysis as closely as possible with both the simple argument we describe in this section and the rigorous theory we develop in the next section.




\vskip 50pt

\section{Theory} \label{sec:theory}

In this section, we characterize the relationship between the relative prices of bottom- and top-ranked asset subsets, rank crossovers, and the returns for portfolios of ranked assets relative to the market. This is accomplished using the continuous semimartingale asset price representation of \citet{Fernholz/Stroup:2018}. These general methods allow for a relative return characterization that is broad and nests virtually all equilibrium asset pricing theories, so that our results require no commitment to specific models of trading behavior, agent beliefs, or market microstructure.

\subsection{Setup} \label{sec:setup}
Consider a market that consists of $N > 1$ assets. Time is continuous and denoted by $t$ and uncertainty in this market is represented by a probability space $(\O, \F, P)$ that is endowed with a right-continuous filtration $\{\F_t ; t \geq 0\}$. Each asset price $p_i$, $i = 1, \ldots, N$, is represented by a positive continuous semimartingale that is adapted to $\{\F_t ; t \geq 0\}$, so that
\begin{equation} \label{contSemimart}
 p_i(t) = p_i(0) + g_i(t) + v_i(t),
\end{equation}
where $g_i$ is a continuous process of finite variation, $v_i$ is a continuous, square-integrable local martingale, and $p_i(0)$ is the initial price. The semimartingale representation \eqref{contSemimart} decomposes asset price dynamics into a time-varying cumulative growth component, $g_i(t)$, whose total variation is finite over every interval $[0, T]$, and a randomly fluctuating local martingale component, $v_i(t)$. By representing asset prices as continuous semimartingales, we are able to impose almost no structure on the underlying economic environment.


Our approach in this paper is unconventional in that we do not commit to a specific model of asset pricing. Instead, following \citet{Fernholz/Stroup:2018}, we derive results in a very general setting, with the understanding that the minimal assumptions behind these results mean that they will be consistent with almost any underlying economic model. Indeed, essentially any asset price dynamics generated endogenously by a model can be represented as general continuous semimartingales of the form \eqref{contSemimart}. This generality is crucial, since our goal is to derive results that apply to all economic and financial environments.

What are the assumptions the framework \eqref{contSemimart} relies on, and how do these assumptions relate to other asset pricing theories? A first important assumption is that assets do not pay dividends, so that returns are driven entirely by capital gains via price changes. In this sense, we can think of the $N$ assets in the market as rolled-over futures contracts that guarantee delivery of some underlying real asset on a future date. We emphasize that this assumption is for simplicity only. Our results can easily be extended to include general continuous semimartingale dividend processes similar to \eqref{contSemimart}. Including such dividend processes complicates the theory but does not change the basic insight of our results, a point that we discuss further below.

Second, we consider a closed market in which there is no asset entry or exit over time. We assume that the $N$ assets in the market are unchanged over time. In Appendix \ref{sec:entry}, we extend our main results to a market in which assets occasionally exit and are replaced by new assets over time. While such entry and exit does complicate our results, it does not alter the basic insight that rank crossovers and changes in the relative prices of ranked asset subsets are key determinants of returns as illustrated by \eqref{intuitiveEq}. Furthermore, for our empirical analysis in Section \ref{sec:empirics} we consider commodity futures contracts in which asset exit --- the more significant omission --- does not occur over our sample period, thus aligning our empirical analysis as closely as possible with the theoretical assumptions of no dividends and no entry or exit.

Another assumption behind \eqref{contSemimart} is that the prices $p_i$ are continuous functions of time $t$ that are adapted to the filtration $\{\F_t ; t \geq 0\}$. The assumption of continuity is essential for mathematical tractability, since we rely on stochastic differential equations whose solutions are readily obtainable in continuous time to derive our theoretical results. Given the generality of our setup, it is difficult to see how introducing instantaneous jumps into the asset price dynamics \eqref{contSemimart} would meaningfully alter our conclusions. Nonetheless, it is important and reassuring that in Section \ref{sec:empirics} we confirm the validity of our continuous-time results using monthly, discrete-time asset price data. This is not surprising, however, since an instantaneous price jump is indistinguishable from a rapid but continuous price change --- this is allowed according to \eqref{contSemimart} --- using discrete-time data. Finally, the assumption that asset prices $p_i$ are adapted means only that they cannot depend on the future. This reflects the reality that agents cannot relay information about the future realization of stochastic processes to the present.

The decomposition \eqref{contSemimart} separates asset price dynamics into two distinct parts. The first, the finite variation process $g_i$, has an instantaneous variance of zero (zero quadratic variation) and measures the cumulative growth in price over time. Despite its finite variation, the cumulative growth process $g_i$ can constantly change depending on economic and financial conditions as well as other factors, including the prices of the different assets. In Section \ref{sec:results}, we show that our main relative return decomposition result consists of a finite variation process as well. In the subsequent empirical analysis Section \ref{sec:empirics}, we construct this finite variation process using discrete-time asset price data and show a clear contrast between its time-series behavior and the behavior of processes with positive quadratic variation (instantaneous variance greater than zero). 

The second part of the decomposition \eqref{contSemimart} consists of the square-integrable local martingale $v_i$. In general, this process has a positive instantaneous variance (positive quadratic variation), and thus its fluctuations are much larger and faster than for the finite variation cumulative growth process $g_i$. Note that a local martingale is more general than a martingale \citep{Karatzas/Shreve:1991}, and thus includes an extremely broad class of continuous stochastic processes. Intuitively, the process $v_i$ can be thought of as a random walk with a variance that can constantly change depending on economic and financial conditions as well as other factors. Furthermore, we allow for a rich structure of potentially time-varying covariances among the local martingale components $v_i$ of different asset prices.

The commodity futures market we apply our theoretical results to offers one of the cleanest applications of our theory, since commodities rarely exit the market and their futures contracts do not pay dividends. A number of studies have decomposed commodity futures prices into risk premia and forecasts of future spot prices \citep{Fama/French:1987,Chinn/Coibion:2014}. Commodity spot prices, which are a major determinant of futures prices, have in turn been linked to storage costs and fluctuations in supply and demand \citep{Brennan:1958,Alquist/Coibion:2014}. In the context of this literature, there are many potential mappings from the fundamental economic and financial forces that determine spot and futures commodity prices to the general continuous semimartingale representation of asset prices \eqref{contSemimart}. Indeed, higher storage costs, increases in demand, rising risk premia, and many other factors can be represented as increases in the cumulative growth process $g_i$. Similarly, all of the unpredictable random shocks that impact commodity markets can be represented as changes in the local martingale $v_i$.

The crucial point, however, is that all of these models and the different economic and financial factors that they emphasize are consistent with the reduced form representation of asset prices \eqref{contSemimart}. After all, any model that proposes an explanation for the growth and volatility of commodity prices can be translated into our setup. The advantage of \eqref{contSemimart} is that we need not commit to any particular model of asset pricing, thus allowing us to derive results that are consistent across all the different models.

\subsection{Ranked Assets} \label{sec:rank}
In order to characterize the rank effect, it is necessary to introduce notation for ranked asset prices. Before doing so, however, it is useful to define relative asset prices. Let $\theta = (\theta_1, \ldots, \theta_N)$, where each $\theta_i$, $i = 1, \ldots, N$, is given by
\begin{equation} \label{theta}
 \theta_i(t) = \frac{p_i(t)}{\sum_{i=1}^N p_i(t)}.
\end{equation}
Because the continuous semimartingales $p_i$ are all positive by assumption, it follows that $0 < \theta_i < 1$, for all $i = 1, \ldots, N$. By construction, we also have that $\theta_1 + \cdots + \theta_N = 1$. 

For $k = 1, \ldots, N$, let $p_{(k)}(t)$ represent the price of the $k$-th most expensive asset at time $t$, so that
\begin{equation}
 \max (p_1(t), \ldots, p_N(t)) = p_{(1)}(t) \geq p_{(2)}(t) \geq \cdots \geq p_{(N)}(t) = \min (p_1(t), \ldots, p_N(t)),
\end{equation}
and let $\theta_{(k)}(t)$ be the relative price of the $k$-th most expensive asset at time $t$, so that
\begin{equation} \label{rankTheta}
\theta_{(k)}(t) = \frac{p_{(k)}(t)}{\sum_{i=1}^N p_i(t)}.
\end{equation}
In our analysis below, we consider portfolios consisting of bottom- and top-ranked subsets of the $N$ assets. For this purpose, it useful to introduce notation for the relative prices of subsets of bottom- and top-ranked assets. For any cutoff $1 \leq c < N$, let $\Theta_{sc}$ denote the relative price of the $N - c$ bottom-ranked, lowest-priced assets, so that
\begin{equation}
 \Theta_{sc}(t) = \sum_{k = c + 1}^N \theta_{(k)}(t).
\end{equation}
Similarly, let $\Theta_{bc}$ denote the relative price of the $c$ top-ranked, highest-priced assets, so that
\begin{equation}
 \Theta_{bc}(t) = \sum_{k=1}^c \theta_{(k)}(t),
\end{equation}
for all $1 \leq c \leq N$.

In order to describe the dynamics of portfolios consisting of ranked assets, it is necessary to introduce the notion of a local time process. Local time processes are necessary because the rank function is not differentiable and hence we cannot simply apply \ito's Lemma when describing the behavior of portfolios consisting of ranked assets. For any continuous process $x$, the \emph{local time} at $0$ for $x$ is the process $\L_x$ defined by
\begin{equation} \label{localTime}
 \L_x(t) = \frac{1}{2}\left(|x(t)| - |x(0)| - \intt\sgn(x(s))\,dx(s)\right).
\end{equation}
As detailed by \citet{Karatzas/Shreve:1991} and \citet{Fernholz:2017a}, the local time for $x$ measures the amount of time the process $x$ spends near zero and can also be defined as
\begin{equation}
  \L_x(t) = \lim_{\ep \downarrow 0} \frac{1}{2\ep} \int_0^t 1_{\{ |x(s)| < \ep \}}\,ds.
\end{equation}
To be able to link asset rank to asset index, let $\o_t$ be the permutation of $\{1, \ldots, N\}$ such that for $1 \leq i, k \leq N$,
\begin{equation} \label{pTK}
 \o_t(k) = i \quad \text{if} \quad p_{(k)}(t) = p_i(t).
\end{equation}
This definition implies that $\o_t(k) = i$ whenever asset $i$ is the $k$-th most expensive asset at time $t$, with ties broken in some consistent manner.\footnote{For example, if $p_i(t) = p_j(t)$ and $i > j$, then we can set $\o_t(k) = i$ and $\o_t(k+1) = j$.}

Local time processes play a key role in decomposing the returns of different portfolios of ranked assets. In particular, the local times of the processes $\theta_{(k)} - \theta_{(k+1)}$, $k = 1, \ldots, N-1$, appear throughout our analysis below. In order to simplify the notation, then, let 
\begin{equation}
 \L^k(t) = \L_{\theta_{(k)} - \theta_{(k+1)}}(t),
\end{equation}
for all $k = 1, \ldots, N-1$. The local time processes $\L^k$ measure the intensity of rank crossovers. More specifically, for each $k = 1, \ldots, N-1$, the local time process $\L^k$ measures the intensity at which the $k$-th and $k+1$-ranked assets crossover in rank, since any time the process $\theta_{(k)} - \theta_{(k+1)}$ reaches zero there is a potential crossover in rank. Finally, for notational completeness, let $\L^N = 0$. 

\begin{ass} \label{triplePoint}
For all $t$, if $p_i(t) = p_j(t)$ for some $i, j = 1, \ldots, N$, $i \neq j$, then $p_i(t) \neq p_\ell(t)$ for all $\ell = 1, \ldots, N$, $\ell \neq i$ and $\ell \neq j$.
\end{ass}

Assumption \ref{triplePoint} ensures that no triple points occur, meaning that no more than two different assets have the same price at the same time $t$. This assumption is for simplicity. Our main results in Section \ref{sec:results} can be extended to markets in which triple points do occur, but such an extension substantially complicates the exposition of these results without changing the basic insight.\footnote{In fact, Assumption \ref{triplePoint} is slightly stronger than what is needed to simplify the results in Section \ref{sec:results}. The assumption we need for such a simplification is only that the local times at zero of the processes $\theta_{(k)} - \theta_{(k+\ell)}$, $k = 1, \ldots, N-2$, $\ell \geq 2$, are equal to zero. See \citet{Ichiba/Papathanakos/Banner/Karatzas/Fernholz:2011} and \citet{Ichiba/Karatzas/Shkolnikov:2013}.}  We refer the reader to the more general framework of \citet{Karatzas/Ruf:2017} for such an extension of our results. 


\subsection{Portfolios of Ranked Assets} \label{sec:port}
A \emph{portfolio of ranked assets} $\p(t) = (\p_{1}(t), \ldots, \p_{N}(t))$ states that $\p_{k}(t)$ shares of the $k$-th ranked asset $\o_t(k)$ should be held at time $t$, for each rank $k = 1, \ldots, N$. More precisely, a portfolio of ranked assets $\p$ states that $\p_{k}$ shares of the asset with price $p_{(k)}$, which is the price of the $k$-th ranked asset, should be held. The shares $\p_{1}, \ldots, \p_{N}$ that make up a portfolio must be measurable, adapted, and non-negative.\footnote{The assumption that portfolios hold only non-negative shares of each asset, and hence do not hold short positions, is for simplicity only. Our theory and results can be extended to long-short portfolios as well.} The \emph{value} of the portfolio $\p$ is denoted by $V_{\p} > 0$, and satisfies
\begin{equation} \label{valueEq}
 V_{\p}(t) = \sum_{k=1}^N \p_{k}(t)p_{(k)}(t), 
\end{equation}
for all $t$. 

It is sometimes also useful to describe portfolios $\p$ in terms of \emph{weights}, denoted by $w^{\p}(t) = (w^{\p}_{1}(t), \ldots, w^{\p}_{N}(t))$, which measure the fraction of portfolio $\p$ invested in each asset. The shares of each ranked asset held by a portfolio, $\p_{k}$, are easily linked to the weights of that portfolio, $w^{\p}_{k}$. In particular, the portfolio $\p(t) = (\p_{1}(t), \ldots, \p_{N}(t))$ has weights equal to
\begin{equation} \label{weightsEq}
 w^{\p}_{k}(t) = \frac{p_{(k)}(t)\p_{k}(t)}{V_{\p}(t)},
\end{equation}
for all ranks $k = 1, \ldots, N$ and all $t$, since \eqref{weightsEq} is equal to the dollar value invested in the $k$-th ranked asset by portfolio $\p$ divided by the dollar value of portfolio $\p$. It is easy to confirm using \eqref{valueEq} and \eqref{weightsEq} that the weights $w^{\p}_{k}$ sum up to one.

We require that all portfolios satisfy the self-financibility constraint, which ensures that gains or losses from the portfolio $\p$ account for all changes in the value of the investment over time. This implies that
\begin{equation}
 V_{\p}(t) - V_{\p}(0) = \int_0^t \sum_{k=1}^N \p_{k}(t) \, dp_{\o_t(k)}(t),
\end{equation}
for all $t$. For comparability and without loss of generality, we set the initial holdings for all portfolios equal to the combined initial value of all assets in the economy, so that
\begin{equation} \label{valueNormalizationEq}
 V_{\p}(0) = \sum_{k=1}^N p_{(k)}(0),
\end{equation}
for all portfolios $\p$.

One simple example of a portfolio that plays a central role in much of our theoretical and empirical analysis is the \emph{market portfolio}, which we denote by $m$. The market portfolio $m$ holds one share of each asset, so that $m(t) = (1, \ldots, 1)$ for all $t$. Following \eqref{valueEq}, we have that the value of the market portfolio, $V_m$, is given by
\begin{equation} \label{marketValueEq}
  V_m(t) = \sum_{k=1}^N m_{k}(t)p_{(k)}(t) = \sum_{k=1}^N p_{(k)}(t) = \sum_{i=1}^N p_i(t),
\end{equation}
for all $t$. The last equality of \eqref{marketValueEq} follows because the sum of all $N$ asset prices is always the same, regardless of whether those prices are summed across rank $k$ or index $i$. Note that the market portfolio satisfies the self-financibility constraint, since\footnote{We use the notation $dp_{\o_t(k)}(t)$ for simplicity, and note that $dp_{\o_t(k)}(t) = \sum_{i=1}^N 1_{\{\o_t(k) = i\}}dp_i(t)$, for all $k = 1, \ldots, N$ and all $t$.}
\begin{equation}
 \int_0^t \sum_{k=1}^N m_{k}(t) \, dp_{\o_t(k)}(t) = \int_0^t \sum_{i=1}^N \, dp_i(t) = \sum_{k=1}^N p_i(t) - \sum_{i=1}^N p_i(0) = V_m(t) - V_m(0),
\end{equation}
for all $t$. It also satisfies the initial condition \eqref{valueNormalizationEq}, as shown by evaluating \eqref{marketValueEq} at $t = 0$. 

The market portfolio $m$ has weights equal to the ranked relative prices $\theta_{(k)}$ defined in \eqref{rankTheta}. This follows from \eqref{weightsEq} and \eqref{marketValueEq}, which together imply that
\begin{equation} \label{marketWeights}
 w^m_{k}(t) = \frac{p_{(k)}(t)}{V_m(t)} = \frac{p_{(k)}(t)}{\sum_{i=1}^N p_i(t)} = \theta_{(k)}(t),
\end{equation}
for all $k = 1, \ldots, N$ and all $t$. In this way, the market portfolio is a price-weighted portfolio, since the amount it invests in each asset is proportional to the price of that asset.

The two other portfolios that are central to our theoretical and empirical analysis are the portfolios of subsets of bottom- and top-ranked assets. For any $1 \leq c < N$, the \emph{``small'' portfolio} of bottom-ranked, lower-priced assets is denoted by $s_c(t) = (s_{c, 1}(t), \ldots, s_{c, N}(t))$, and defined by
\begin{equation}
 s_{c, k}(t) = \left\{\begin{array}{ll}
        0, & \text{for }  k \leq c \\
        1/\Theta_{sc}(0), & \text{for } k > c,
        \end{array}\right.
\end{equation}
for all $k = 1, \ldots, N$.\footnote{The portfolio $s_c$ holds $1/\Theta_{sc}(0)$ shares of the assets rather than one share in order to ensure that the initial condition \eqref{valueNormalizationEq} holds. The portfolio $s_c$ and a portfolio that holds one share of the same assets have identical returns.} The portfolio $s_c$ purchases equal numbers of shares of each of the ``smallest'', lowest-priced $N - c$ assets in the market, and has weights given by
\begin{equation} \label{smallWeights}
 w^{sc}_{k}(t) = \left\{\begin{array}{ll}
        0, & \text{for }  k \leq c \\
        \frac{\theta_{(k)}(t)}{\Theta_{sc}(t)}, & \text{for } k > c,
        \end{array}\right.
\end{equation}
for all $k = 1, \ldots, N$ and all $t$. Thus, the small portfolio $s_c$ price-weights each of the $N - c$ assets it holds, in a manner similar to the market portfolio which price-weights all $N$ assets.

For any $1 \leq c \leq N$, the \emph{``big'' portfolio} of top-ranked, higher-priced assets is denoted by $b_c(t) = (b_{c, 1}(t), \ldots, b_{c, N}(t))$ and defined by
\begin{equation}
 b_{c, k}(t) = \left\{\begin{array}{ll}
        1/\Theta_{bc}(0), & \text{for }  k \leq c \\
        0, & \text{for } k > c,
        \end{array}\right.
\end{equation}
for all $k = 1, \ldots, N$. The portfolio $b_c$ purchases equal numbers of shares of each of the ``biggest'', highest-priced $c$ assets in the market, and has weights given by
\begin{equation} \label{bigWeights}
 w^{bc}_{k}(t) = \left\{\begin{array}{ll}
        \frac{\theta_{(k)}(t)}{\Theta_{bc}(t)}, & \text{for }  k \leq c \\
        0, & \text{for } k > c,
        \end{array}\right.
\end{equation}
for all $k = 1, \ldots, N$ and all $t$. Like the small portfolio $s_c$ and the market portfolio $m$, the big portfolio $b_c$ price-weights each of the $c$ assets it holds.

\subsection{Results} \label{sec:results}
In this section, we characterize the returns of the small and big portfolios relative to the market portfolio as well as relative to each other. We show that these relative returns can be decomposed into rank crossovers, as measured by a local time process, and changes in the relative price of bottom- and top-ranked asset subsets, just like in \eqref{intuitiveEq}. The following theorem, which is similar to the more general results in Theorem 3.8 and Example 3.9 of \citet{Karatzas/Ruf:2017}, establishes this fact. Its proof is in the appendix.

\begin{thm} \label{relValueThm}
For all $1 \leq c < N$, the small portfolio $s_c$ has a value process $V_{sc}$ that satisfies
\begin{equation} \label{relValueEq1}
 \log V_{sc}(T) - \log V_m(T) = \frac{1}{2} \int_0^T \frac{d\L^c(t)}{\Theta_{sc}(t)}  + \log \Theta_{sc}(T) - \log \Theta_{sc}(0),
\end{equation}
for all $T$. For all $1 \leq c \leq N$, the big portfolio $b_c$ has a value process $V_{bc}$ that satisfies
\begin{equation} \label{relValueEq2}
 \log V_{bc}(T) - \log V_m(T) =  -\frac{1}{2} \int_0^T \frac{d\L^c(t)}{\Theta_{bc}(t)} + \log \Theta_{bc}(T) - \log \Theta_{bc}(0),
\end{equation}
for all $T$.
\end{thm}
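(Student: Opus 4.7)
The approach is to recognize the small and big portfolios as \emph{functionally generated} --- in the sense of \citet{Fernholz:2002} --- by the linear rank-based functions $S_s(\theta) = \Theta_{sc}$ and $S_b(\theta) = \Theta_{bc}$, and then invoke the master equation for rank-based functional generation (Theorem~3.8 of \citet{Karatzas/Ruf:2017}). My first step is the identification: the weight formula for the portfolio generated by a $C^2$ function $S$ of the ranked market weights is $w^\pi_k = \theta_{(k)}[\partial_k \log S + 1 - \sum_j \theta_{(j)} \partial_j \log S]$, and for both $S_s$ and $S_b$ the bracketed correction vanishes (since $\sum_j \theta_{(j)} \partial_j \log S_s = \Theta_{sc}/\Theta_{sc} = 1$, and analogously for $S_b$), leaving $w^\pi_k = \theta_{(k)}\, \partial_k S / S$, which collapses exactly to the given weights \eqref{smallWeights} and \eqref{bigWeights}.

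My second step is to write down the master equation: for any $C^2$ rank-based generating function $S$,
\begin{equation*}
 d \log (V_\pi(t)/V_m(t)) = d \log S(\theta_{(\cdot)}(t)) - \frac{1}{2 S} \sum_{j,k} \partial^2_{jk} S \, d \langle \theta_{(j)}, \theta_{(k)} \rangle(t) + \frac{1}{2 S} \sum_{k=1}^{N-1} \bigl( \partial_{k+1} S - \partial_k S \bigr) \, d \L^k(t).
\end{equation*}
The rank-based local-time sum arises from applying It\^o--Tanaka to $S(\theta_{(\cdot)})$ and using the semimartingale decomposition $d\theta_{(k)} = d\theta_{\o_t(k)} + \tfrac{1}{2}(d\L^k - d\L^{k-1})$ of each ranked weight at adjacent-rank crossings. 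Assumption \ref{triplePoint} is precisely the condition that ensures only \emph{adjacent}-rank local times enter this decomposition.

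My third step is substitution. Both $S_s$ and $S_b$ are linear in their arguments, so their Hessians vanish identically and the middle sum drops out. Their first derivatives are indicators, $\partial_k S_s = \mathbf{1}_{\{k > c\}}$ and $\partial_k S_b = \mathbf{1}_{\{k \leq c\}}$, so the consecutive differences are $\partial_{k+1} S_s - \partial_k S_s = \mathbf{1}_{\{k = c\}}$ and $\partial_{k+1} S_b - \partial_k S_b = -\mathbf{1}_{\{k = c\}}$. Only a single local-time term survives in each case: $(2 \Theta_{sc})^{-1} d \L^c$ for the small portfolio and $-(2 \Theta_{bc})^{-1} d \L^c$ for the big portfolio. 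Integrating from $0$ to $T$ and using the normalization \eqref{valueNormalizationEq} --- which makes $\log(V_\pi(0)/V_m(0)) = 0$ --- yields \eqref{relValueEq1} and \eqref{relValueEq2}. The main technical obstacle is the master equation itself: its derivation requires careful It\^o--Tanaka analysis of $S(\theta_{(\cdot)})$ with every adjacent-rank crossing contributing a boundary local time to the drift. The complete derivation can be imported from \citet{Karatzas/Ruf:2017}, or carried out directly using Tanaka's formula \eqref{localTime} applied iteratively to each pair $\theta_{(k)} - \theta_{(k+1)}$.
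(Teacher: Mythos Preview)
Your proposal is correct and takes essentially the same approach as the paper: both recognize the small and big portfolios as generated by the linear rank-based functions $\Theta_{sc}$ and $\Theta_{bc}$ and invoke the Karatzas--Ruf master formula (their Theorem~3.8 / Example~3.9), with Assumption~\ref{triplePoint} killing all non-adjacent local times. The only difference is presentational --- the paper cites Example~3.9 and Proposition~4.8 of \citet{Karatzas/Ruf:2017} directly for the $\Gamma^{F_c}$ term, whereas you write out the master equation explicitly and compute $\partial_{k+1}S - \partial_k S = \pm\mathbf{1}_{\{k=c\}}$ by hand; the content is identical.
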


Theorem \ref{relValueThm} is powerful because it decomposes the relative value of all portfolios of bottom- and top-ranked asset subsets into cumulative rank crossovers, as measured by the non-negative local time process $d\L^c$, and the change in the relative price of the bottom- and top-ranked asset subsets. In the case of the big portfolio, these crossovers subtract from relative returns since they consist of higher-ranked, higher-priced assets dropping out of the top subset. Crucially, these portfolios are easily constructed \emph{without any knowledge of the underlying fundamentals of the assets}. The small and big portfolios $s_c$ and $b_c$ hold, respectively, equal numbers of shares of the bottom $N-c$ ranked assets and the top $c$ ranked assets at all times $t$. These asset price ranks are easily observed over time, and do not require difficult calculations or costly information acquisition.

In addition to the decompositions \eqref{relValueEq1} and \eqref{relValueEq2}, Theorem \ref{relValueThm} also yields a simple decomposition of the value of the small portfolio relative to the big portfolio. This is obtained immediately by subtracting \eqref{relValueEq2} from \eqref{relValueEq1}.

\begin{cor} \label{relValueCor}
For all $1 \leq c < N$, the value of the small portfolio $s_c$ relative to the big portfolio $b_c$ satisfies
\begin{equation} \label{relValueEq3}
\begin{aligned}
 \log V_{sc}(T) - \log V_{bc}(T) & = \frac{1}{2} \int_0^T \left( \frac{1}{\Theta_{sc}(t)} + \frac{1}{\Theta_{bc}(t)} \right) \, d\L^c(t) \\
 & \qquad \qquad + \log \big( \Theta_{sc}(T)/\Theta_{bc}(T) \big) - \log \big( \Theta_{sc}(0)/\Theta_{bc}(0) \big),
\end{aligned}
\end{equation}
for all $T$.
\end{cor}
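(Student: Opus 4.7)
The plan is to obtain \eqref{relValueEq3} by straightforward termwise subtraction of the two identities already established in Theorem \ref{relValueThm}. Both \eqref{relValueEq1} and \eqref{relValueEq2} express the log value of a ranked portfolio relative to the same market-portfolio value $V_m(T)$, and $\log V_m(T)$ enters each identity with the same coefficient $-1$. Subtracting the big-portfolio identity \eqref{relValueEq2} from the small-portfolio identity \eqref{relValueEq1} at a common terminal time $T$ therefore cancels $\log V_m(T)$ and leaves $\log V_{sc}(T) - \log V_{bc}(T)$ on the left. No new stochastic-calculus content beyond the proof of Theorem \ref{relValueThm} is required.

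Carrying out the subtraction, the two local-time integrals combine because the sign in front of the $\Theta_{bc}$ integral flips from minus to plus:
\begin{equation*}
 \frac{1}{2} \int_0^T \frac{d\L^c(t)}{\Theta_{sc}(t)} - \left( -\frac{1}{2} \int_0^T \frac{d\L^c(t)}{\Theta_{bc}(t)} \right) = \frac{1}{2} \int_0^T \left( \frac{1}{\Theta_{sc}(t)} + \frac{1}{\Theta_{bc}(t)} \right) d\L^c(t),
\end{equation*}
by linearity of the Lebesgue--Stieltjes integral against the nondecreasing integrator $\L^c$. The remaining boundary terms collect to $\log \Theta_{sc}(T) - \log \Theta_{sc}(0) - \log \Theta_{bc}(T) + \log \Theta_{bc}(0)$, which I would then regroup via $\log a - \log b = \log(a/b)$ at each of the two endpoints to get $\log(\Theta_{sc}(T)/\Theta_{bc}(T)) - \log(\Theta_{sc}(0)/\Theta_{bc}(0))$. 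Matching the two sides yields \eqref{relValueEq3} exactly.

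Frankly, there is no hard step here: the authors themselves flag the corollary as immediate, and the computation above is essentially the entire proof. The only minor bookkeeping item is the range of $c$: identity \eqref{relValueEq1} is stated for $1 \leq c < N$ while \eqref{relValueEq2} is stated for $1 \leq c \leq N$, so the subtraction is legitimate on the intersection $1 \leq c < N$, which is precisely the range assumed in the corollary. I would note this in one sentence and conclude.
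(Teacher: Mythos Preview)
Your proposal is correct and matches the paper's own argument exactly: the authors state that \eqref{relValueEq3} is obtained immediately by subtracting \eqref{relValueEq2} from \eqref{relValueEq1}, and your termwise computation (including the observation about the common range $1 \leq c < N$) is precisely that subtraction carried out in detail.
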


Like Theorem \ref{relValueThm}, Corollary \ref{relValueCor} decomposes the value of the small portfolio relative to the big portfolio into cumulative rank crossovers and the change in the price of the small, bottom-ranked assets relative to the price of the big, top-ranked assets. In this decomposition, cumulative rank crossovers are measured by the local-time process $d\L^c$, as in the theorem. This decomposition is valid for any small-big rank cutoff $c$. 

The decompositions \eqref{relValueEq1} and \eqref{relValueEq2} from Theorem \ref{relValueThm} characterize the log values of the portfolios $s_c$ and $b_c$ relative to the log value of the market portfolio $m$ at time $T$ in terms of the cumulative value of the adjusted local time process, respectively, $\int_0^T \frac{d\L^c(t)}{2\Theta_{sc}(t)}$ and $-\int_0^T \frac{d\L^c(t)}{2\Theta_{bc}(t)}$, and the change in the log relative price of the bottom- and top-ranked asset subsets, respectively, $\log \Theta_{sc}(T) - \log \Theta_{sc}(0)$ and $\log \Theta_{bc}(T) - \log \Theta_{bc}(0)$. In order to go from these characterizations of relative portfolio values to a characterization of relative portfolio returns, we take differentials of both sides of \eqref{relValueEq1} and \eqref{relValueEq2}. This yields
 \begin{equation} \label{relReturnEq1}
 d \log V_{sc}(t) - d \log V_m(t) = \frac{d\L^c(t)}{2\Theta_{sc}(t)} + d\log \Theta_{sc}(t),
\end{equation}
for all $t$, and
\begin{equation} \label{relReturnEq2}
 d \log V_{bc}(t) - d \log V_m(t) = -\frac{d\L^c(t)}{2\Theta_{bc}(t)} + d\log \Theta_{bc}(t),
\end{equation}
for all $t$. According to \eqref{relReturnEq1}, the log return of the small portfolio relative to the market can be decomposed into positive rank crossovers, measured by $d\L^c \geq 0$, and changes in the relative price of the bottom-ranked assets, measured by $ d\log \Theta_{sc}$. Similarly, \eqref{relReturnEq2} states that the log return of the big portfolio relative to the market can be decomposed into negative rank crossovers, measured by $-d\L^c \leq 0$, and changes in the relative price of the top-ranked assets, measured by $d\log \Theta_{bc}$. A similar transformation of \eqref{relValueEq3} yields
\begin{equation} \label{relReturnEq3}
 d \log V_{sc}(t) - d \log V_{bc}(t) = \frac{1}{2}\left( \frac{1}{\Theta_{sc}(t)} + \frac{1}{\Theta_{bc}(t)} \right) \, d\L^c(t) + d\log \big( \Theta_{sc}(t)/\Theta_{bc}(t) \big),
\end{equation}
for all $t$. As with \eqref{relReturnEq1} and \eqref{relReturnEq2}, \eqref{relReturnEq3} states that the log return of the small portfolio relative to the big portfolio can be decomposed into positive rank crossovers, again measured by $d\L^c \geq 0$, and changes in the price of bottom-ranked assets relative to top-ranked assets, measured by $d\log ( \Theta_{sc}/\Theta_{bc} )$.

The relative return characterizations \eqref{relReturnEq1}-\eqref{relReturnEq3} are of the same form as the intuitive version \eqref{intuitiveEq} presented in the Introduction. Therefore, Theorem \ref{relValueThm} and Corollary \ref{relValueCor} imply that increases (decreases) in bottom-ranked relative asset prices raise (lower) the returns for the small portfolio relative to both the market and big portfolios. They also imply that if bottom- and top-ranked relative prices are unchanged, then the relative returns for the small portfolio will be either non-negative or positive and the relative returns for the big portfolio will be either non-positive or negative. This rank effect is a simple and necessary consequence of the non-negativity of the local time process in \eqref{relValueEq1}-\eqref{relValueEq3}. We confirm these predictions using commodity futures data in Section \ref{sec:empirics}.

Another implication of Theorem \ref{relValueThm} and Corollary \ref{relValueCor} is that one part of the decomposition of the relative values of the portfolios $s_c$ and $b_c$ is a finite variation process. In particular, the cumulative value of the adjusted local time process, $\int_0^T \frac{d\L^c(t)}{2\Theta_{sc}(t)}$, is a finite variation process by construction, as are the corresponding adjusted local time processes from \eqref{relValueEq2} and \eqref{relValueEq3}. To see why these are finite variation processes, note that the stochastic integral of a non-negative continuous process is non-decreasing and continuous, and any non-decreasing continuous process is a finite variation process \citep{Karatzas/Shreve:1991}.

Recall from Section \ref{sec:setup} that a finite variation process has finite total variation over every interval $[0, T]$. This means that the process has zero quadratic variation, or equivalently, zero instantaneous variance. In Section \ref{sec:empirics}, we decompose actual relative returns as described by Theorem \ref{relValueThm} using monthly commodity futures data and show a clear contrast between the time-series behavior of the zero-instantaneous-variance process $\int_0^T \frac{d\L^c(t)}{2\Theta_{sc}(t)}$ and the positive-instantaneous-variance process $\log \Theta_{sc}(T)$. In particular, we find that the sample variance of the finite variation process is orders of magnitude lower than that of the positive quadratic variation process, as predicted by the theorem. We find similar behavior for the finite variation processes in \eqref{relValueEq2} and \eqref{relValueEq3} as well.

The decompositions \eqref{relValueEq1}-\eqref{relReturnEq3} are little more than accounting identities, which are approximate in discrete time and exact in continuous time. There are essentially no restrictive assumptions about the underlying dynamics of asset prices and their co-movements that go into these results, making it difficult to imagine an equilibrium model of asset pricing that meaningfully deviates with Theorem \ref{relValueThm} or Corollary \ref{relValueCor}. Despite this generality, two simplifying assumptions behind these results --- that assets do not pay dividends, and that the market is closed so that there is no asset entry or exit over time --- merit further discussion.

If we were to include dividends in our framework, we would get relative value decompositions that are very similar to \eqref{relValueEq1}-\eqref{relValueEq3}. The only difference would be an extra term added to the decompositions \eqref{relValueEq1} and \eqref{relValueEq2} that measures the cumulative dividends from the small and big portfolios relative to the cumulative dividends from the market portfolio. Similarly, an extra term measuring the cumulative dividends from the small portfolio relative to the cumulative dividends from the big portfolio would need to be added to the decomposition \eqref{relValueEq3}. In the presence of dividends, then, relative capital gains could still be decomposed into rank crossovers and changes in relative prices as in Theorem \ref{relValueThm} and Corollary \ref{relValueCor}. The only complication would be an extra term that measures relative cumulative dividends as part of relative investment value.

The results of Theorem \ref{relValueThm} and Corollary \ref{relValueCor} can also be extended to include entry and exit into and out of the set of available market assets over time. This is accomplished by introducing another local time process in the decompositions \eqref{relValueEq1}-\eqref{relValueEq3} that measures the differential impact of entry and exit on the returns of the pair of portfolios being compared. We consider such an extension of our framework in Appendix \ref{sec:entry}, and provide decompositions similar to those in Theorem \ref{relValueThm} and Corollary \ref{relValueCor}.

\vskip 50pt

\section{Empirical Results} \label{sec:empirics}

Having characterized the relative returns in full generality in Section \ref{sec:theory}, we now turn to an empirical analysis. We wish to investigate the accuracy of the decompositions in Theorem \ref{relValueThm} and Corollary \ref{relValueCor} using real asset price data. In this section, we show that these decompositions provide accurate descriptions of actual relative returns for small and big portfolios of commodity futures.

\subsection{Data} \label{sec:data}
We use data on the prices of 30 different commodity futures from 1969-2018 to test our theoretical predictions. The choice to focus on commodity futures is motivated by the fact that the two most important assumptions we impose on our theoretical framework --- that assets do not pay dividends, and that the market is closed and there is no asset entry or exit over time --- align fairly closely with commodity futures markets. These assets do not pay dividends, with returns driven entirely by capital gains. Commodity futures also rarely exit from the market, which is notable since such exit can affect the returns of the small and big portfolios. In fact, no commodity futures contracts that we are aware of disappear from the market from 1969-2018, so this potential issue is irrelevant over the time period we consider. While new commodity futures contracts do enter into our data set between 1969-2018, such entry does not affect our empirical results and is easily incorporated into our framework as we explain in detail below.


Table \ref{commInfoTab} lists the start date and trading market for the 30 commodity futures in our 1969-2018 data set. These commodities encompass the four primary commodity domains (energy, metals, agriculture, and livestock). The table also reports the annualized average and standard deviation of daily log price changes over the lifetime of each futures contract. These data were obtained from the Pinnacle Data Corp., and report the two-month-ahead futures price of each commodity on each day that trading occurs, with the contracts rolled over each month.

Ranked relative asset prices, as defined by the $\theta_{(k)}$'s in \eqref{rankTheta}, are crucial to our theoretical framework and results. This concept, however, is essentially meaningless in the context of commodity prices, since different commodities are measured using different units such as barrels, bushels, and ounces. In order to give ranked prices meaning in the context of commodity futures, we normalize all contracts with data on the January 2, 1969 start date so that their prices are equal to each other. All subsequent price changes occur without modification, meaning that price dynamics are unaffected by our normalization. For those commodities that enter into our data set after 1969, we set their initial log prices equal to the average log price of those commodities already in our data set on that date. After these commodities enter into the data set with a normalized price, all subsequent price changes occur without modification. The normalized commodity futures prices we construct are similar to price indexes, with all indexes set equal to each other on the initial start date and any indexes that enter after this start date set equal to the average of the existing indexes.

Figure \ref{relPricesFig} plots the normalized log commodity futures prices relative to the average for all 30 contracts in our data set from 1969-2018. This figure shows how normalized prices quickly disperse after the initial start date, with commodity futures prices constantly being affected by different shocks. After an initial period, however, the normalized commodity futures prices are roughly stable relative to each other with what looks like only modest increases in dispersion occurring after approximately 1980. These patterns are quantified and analyzed in our empirical analysis below.

\subsection{Portfolio Construction} \label{sec:portConst}
For our empirical analysis, it is necessary to construct a market portfolio strategy that holds equal numbers of shares of each asset. In the context of commodity futures, however, the market portfolio cannot hold equal shares of each asset since futures contracts are simply agreements between two parties with no underlying asset held. This issue is easily resolved, however, since the market portfolio weights \eqref{marketWeights} are well-defined in the context of normalized commodity futures prices. In particular, \eqref{marketWeights} implies that the market portfolio invests in each commodity futures contract an amount that is proportional to the normalized price of that commodity. For this reason, we often refer to the market portfolio as the price-weighted market portfolio in the empirical analysis of this section. Note that the market portfolio of commodity futures requires no rebalancing, since price changes automatically cause the weights of each commodity in the portfolio to change in a manner that is consistent with price-weighting.

In addition to the price-weighted market portfolio, we construct small and big portfolios of low- and high-ranked commodity futures as described in Section \ref{sec:port}. The weights that define these portfolios are given by \eqref{smallWeights} and \eqref{bigWeights}, and are constructed using the normalized prices for which price rank is a meaningful concept. We set the rank cutoff between the small and big portfolios $c$ equal to half of the total number of commodity futures $N$, so that the small portfolio holds the bottom half of lower-ranked, lower-priced futures and the big portfolio holds the top half of higher-ranked, higher-priced futures.\footnote{More precisely, we set $c$ equal to the largest integer less than or equal to $N/2$.} There are two cases in which the small and big portfolios need to rebalanced. The first is if there is a change in rank so that one commodity future that was previously in the bottom half of ranked prices is now in the top half. The second is if the total number of commodity futures changes, as occurs when new futures contracts enter into our data set. The small and big portfolios are rebalanced each month if either of these two events occur, but not otherwise.


Finally, even though our commodity futures data cover 1969-2018, the fact that we normalize prices by setting them equal to each other on the 1969 start date implies that ranked relative prices will have little meaning until these prices are given time to disperse. In a manner similar to the commodity value measure of \citet{Asness/Moskowitz/Pedersen:2013}, we wait five years before forming the small, big, and price-weighted market portfolios, so that these portfolios are constructed using normalized prices from 1974-2018. Furthermore, for each commodity that enters our data set after 1969, we wait five years after the entrance date before including that new commodity in the small, big, and price-weighted market portfolios. Because the initial log prices of newly entering commodity futures are set equal to the average log price of the commodities already in our data set, this five-year wait period allows the price of newly entering commodities to disperse in a manner that gives ranked relative prices more meaning.

\subsection{Results}
Figure \ref{returnsFig} plots the log cumulative returns for the price-weighted market portfolio and the small and big portfolios of bottom- and top-ranked commodity futures from 1974-2018. The figure shows that all three portfolios have roughly similar behavior over time, but that the small portfolio consistently outperforms both the price-weighted and big portfolios over time, with the big portfolio performing the worst of all. These patterns are quantified in Table \ref{returnsTab}, which reports the annualized average and standard deviation of monthly returns for all three portfolios over this time period. The monthly returns of the market portfolio have correlations of 0.76 and 0.96 with the returns of the small and big portfolios, respectively, while the returns of the small and big portfolios have a correlation of 0.55. The outperformance of the small portfolio relative to both the price-weighted market and big portfolios is also evident in Table \ref{relReturnsTab}, which reports the annualized average, standard deviation, and Sharpe ratio of monthly relative returns for the small, bottom-ranked portfolio from 1974-2018. Tables \ref{returnsTab} and \ref{relReturnsTab} also report returns statistics for each decade in our long sample period.

The results of Tables \ref{returnsTab} and \ref{relReturnsTab} show that the small portfolio consistently and substantially outperformed the price-weighted market portfolio from 1974-2018, and that the big portfolio consistently and substantially underperformed the price-weighted market portfolio over this time period. The rank effect is most evident from the high Sharpe ratios for the excess returns of the small portfolio relative to the price-weighted and big portfolios, as shown in Table \ref{relReturnsTab}. Notably, both of these Sharpe ratios consistently rise to 0.6 or higher after 1980, which is after most of the commodity futures contracts in our data set have started trading, as shown by Table \ref{commInfoTab}. Thus, as the number of tradable assets $N$ rises, portfolio outperformance also rises. This is not surprising, since a greater number of tradable assets generally implies more rank crossovers and hence a greater value for the non-negative local time process, $d\L^c$, and it is this process that mostly determines relative portfolio returns over long horizons, as we demonstrate below.

The general theory of Section \ref{sec:theory} does not make any statements about the size of portfolio returns. Instead, this theory states that the returns for the small and big portfolios relative to the market can be decomposed into rank crossovers and changes in the relative price of bottom- and top-ranked asset subsets, according to Theorem \ref{relValueThm}. The theory also states that the returns for the small portfolio relative to the big portfolio can be decomposed into rank crossovers and changes in the price of the bottom-ranked asset subset relative to the top-ranked asset subset, according to Corollary \ref{relValueCor}

In order to empirically investigate the decomposition \eqref{relValueEq1} from Theorem \ref{relValueThm}, in Figure \ref{returnsSmallFig} we plot the cumulative abnormal returns --- returns relative to the price-weighted market portfolio --- of the small portfolio together with the cumulative value of rank crossovers, $\int_0^T \frac{d\L^c(t)}{2\Theta_{sc}(t)}$, from 1974-2018. In addition, Figure \ref{relPSmallFig} plots the relative price of the small, bottom-ranked commodity futures, $\Theta_{sc}$, normalized relative to its average value for 1974-2018. In addition to the consistent and substantial outperformance of the small portfolio relative to the price-weighted portfolio, these figures show that short-run relative return fluctuations for the small portfolio closely follow fluctuations in the relative price of the bottom-ranked commodities while the long-run behavior of these relative returns closely follow the smooth crossovers. Indeed, there is a striking contrast between the high volatility of relative prices in Figure \ref{relPSmallFig} and the low volatility of crossovers in Figure \ref{returnsSmallFig}. This is an important observation that is a direct prediction of Theorem \ref{relValueThm}, a point we discuss further below.

In addition to the contrasting volatilities of relative prices and crossovers, Figures \ref{returnsSmallFig} and \ref{relPSmallFig} show that the cumulative abnormal returns of the small portfolio are equal to the cumulative value of the adjusted local time process, $\int_0^T \frac{d\L^c(t)}{2\Theta_{sc}(t)}$, plus the change in the log of the relative price of bottom-ranked commodities, $\log \Theta_{sc}(T) - \log \Theta_{sc}(0)$. Indeed, the solid black line in Figure \ref{returnsSmallFig} (cumulative abnormal returns) is equal to the dashed red line in that same figure (cumulative value of the adjusted local time process) minus the line in Figure \ref{relPSmallFig} (change in the log of the relative price of bottom-ranked commodities). This is exactly the relationship described by \eqref{relValueEq1} from Theorem \ref{relValueThm}. We stress, however, that this empirical relationship is a necessary consequence of how the non-decreasing local time process, $\L^c$, is calculated. For each day that we have data, the value of $\L^c$ for that day is calculated by subtracting the log value of the relative price of bottom-ranked commodities, $\log \Theta_{sc}$, from the cumulative abnormal returns, $\log V_{sc} - \log V_m$, and then adjusting according to the identity \eqref{relValueEq1} from Theorem \ref{relValueThm}.

Given that the empirical decomposition of Figures \ref{returnsSmallFig} and \ref{relPSmallFig} is constructed so that \eqref{relValueEq1} must hold, it is natural to wonder what the usefulness of this decomposition is. One useful aspect of the decomposition \eqref{relValueEq1} lies in the prediction that one part of this decomposition, the cumulative value of the adjusted local time process that measures rank crossovers, is non-decreasing. This prediction is  confirmed by the relatively smooth upward slope of the rank crossovers line in Figure \ref{returnsSmallFig}, and has implications for the long-run relative performance of the small and price-weighted portfolios, as we discuss below. 

Another key prediction of the decomposition \eqref{relValueEq1} is that the cumulative value of the adjusted local time process is a finite variation process, while the other part, the change in the log of the relative price of bottom-ranked commodities, $\log \Theta_{sc}$, is not. Recall from the discussion in Section \ref{sec:setup} that a finite variation process has zero quadratic variation, or zero instantaneous variance. To be clear, the prediction that the cumulative value of the adjusted local time is a finite variation process is not a prediction that the sample variance of changes in the cumulative value of the adjusted local time computed using monthly, discrete-time data will be equal to zero, but rather a prediction that these changes will be roughly constant over time.\footnote{Note that the sample variance of a continuous-time finite variation process computed using discrete-time data will never be exactly equal to zero.} Our results thus predict that the cumulative value of the adjusted local time process will grow at a roughly constant rate with only few and small changes over time.

This smooth growth is exactly what is observed in the dashed red line of Figure \ref{returnsSmallFig}, and, as mentioned above, is in stark contrast to the highly volatile behavior of bottom-ranked relative prices shown in Figure \ref{relPSmallFig}. This contrast can be quantified by noting that the coefficient of variation of changes in the cumulative value of the adjusted local time is equal to 7.57, while the coefficient of variation of changes in the relative price of bottom-ranked commodities is equal to 326.06. These results confirm one of the key predictions of Theorem~\ref{relValueThm}.


The positive and relatively constant rank crossovers over time, measured by the local time $d\L^c$, have an important implication for the long-run return of the small portfolio relative to the price-weighted market portfolio. Since Theorem \ref{relValueThm} and \eqref{relReturnEq1} imply that relative returns can be decomposed into rank crossovers and changes in relative prices, consistently positive crossovers over long time horizons can only be counterbalanced by a consistently falling relative price for bottom-ranked assets. In the absence of such falling relative prices, the positive crossovers guarantee outperformance relative to the market. Therefore, the small decline in the relative price of the small, bottom-ranked commodity futures shown in Figure \ref{relPSmallFig} together with the positive crossovers shown in Figure \ref{returnsSmallFig} ensure the existence of a rank effect in which the small portfolio outperforms the market portfolio over the 1974-2018 time period.

In a similar manner to Figure \ref{returnsSmallFig}, Figure \ref{returnsBigFig} plots the cumulative returns of the big portfolio relative to the price-weighted market portfolio together with the cumulative value of the adjusted local time process $-\frac{d\L^c}{2\Theta_{bc}}$ over the 1974-2018 time period. Also, Figure \ref{relPBigFig} plots the log relative price of top-ranked commodities over this same time period. Similarly, Figure \ref{returnsSmallBigFig} plots the cumulative returns of the small portfolio relative to the big portfolio together with the cumulative value of the adjusted local time process $\frac{1}{2} \left( \frac{1}{\Theta_{sc}} + \frac{1}{\Theta_{bc}} \right) \, d\L^c$, while Figure \ref{relPSmallBigFig} plots the log prices of bottom-ranked commodities relative to top-ranked commodities. The cumulative values of the adjusted local time processes in Figures \ref{returnsBigFig} and \ref{returnsSmallBigFig} are calculated using the identities \eqref{relValueEq2} and \eqref{relValueEq3} from Theorem \ref{relValueThm} and Corollary \ref{relValueCor}. The results in Figures \ref{returnsBigFig}-\ref{relPSmallBigFig} for the big and small portfolios align closely with the results in Figures \ref{returnsSmallFig} and \ref{relPSmallFig}.


Figures \ref{returnsBigFig}-\ref{relPSmallBigFig} show that short-run return fluctuations for both the big portfolio relative to the market and the small portfolio relative to the big portfolio closely follow fluctuations in the relative prices of the corresponding ranked asset subsets. In contrast, the long-run behavior of these relative returns follow the smoothly accumulating rank crossovers. Much like in Figure \ref{returnsSmallFig}, Figures \ref{returnsBigFig} and \ref{returnsSmallBigFig} show that the cumulative values of rank crossovers grow at roughly constant rates over time (negative in the case of the big portfolio), with a clear contrast between this stable growth and the large fluctuations in relative prices shown in Figures \ref{relPBigFig} and \ref{relPSmallBigFig}. As discussed above, the fact that adjusted rank crossovers, which depend on the local time process $\L^c$, are approximately constant over time is consistent with the prediction that their cumulative value is a finite variation process, thus confirming one of the key results in Theorem \ref{relValueThm} and Corollary \ref{relValueCor}. Finally, Figures \ref{returnsBigFig} and \ref{returnsSmallBigFig} confirm the consistent and substantial outperformance of, respectively, the price-weighted market portfolio relative to the big portfolio and the small portfolio relative to the big portfolio. As with the small portfolio relative to the market, this long-run rank effect is predicted by \eqref{relValueEq1}-\eqref{relValueEq3} and Theorem \ref{relValueThm} and Corollary \ref{relValueCor} given the small change in relative prices observed in Figures \ref{relPBigFig} and \ref{relPSmallBigFig} as compared to the large changes in the cumulative value of rank crossovers observed in Figures \ref{returnsBigFig} and \ref{returnsSmallBigFig}.

In order to investigate the robustness of our results to the price normalization start date of January 2, 1969, in Figure \ref{varyStartFig} we report the Sharpe ratios of monthly returns for the small, bottom-ranked portfolio relative to the price-weighted market and big, top-ranked portfolios for a range of data start dates from 1969-2000. More specifically, we create normalized commodity futures prices as described in Section \ref{sec:data} (and shown in Figure \ref{relPricesFig}) with different normalization start dates that range across every quarter from 1969 to the start of 2000. We then construct the small, big, and market portfolios as described in Section \ref{sec:portConst} using each of these different normalized commodity futures price data sets, and then report relative return Sharpe ratios for each data set in Figure \ref{varyStartFig}. These different data sets are identified by their different price normalization start dates. The figure clearly shows that our results are not sensitive to the price normalization start date of January 2, 1969, which is used to generate our main results in Tables \ref{returnsTab}-\ref{confacTab} and Figures \ref{returnsFig}-\ref{relPSmallBigFig} and was chosen only because it maximizes the length of our data sample. In fact, Figure \ref{varyStartFig} shows that most price normalization start dates after 1969 generate even larger Sharpe ratios, with the average across all dates shown in the figure equal to 0.66 for both the small portfolio relative to the market portfolio and the small portfolio relative to the big portfolio --- noticeably higher than the full-sample Sharpe ratios reported in Table \ref{relReturnsTab}.

\vskip 50pt

\section{Discussion} \label{sec:discussion}

The empirical results shown in Figures \ref{returnsSmallFig}-\ref{relPSmallBigFig} confirm the prediction of Theorem \ref{relValueThm} and Corollary \ref{relValueCor} that the rank crossover components of the decompositions \eqref{relValueEq1}-\eqref{relValueEq3} are nearly constant. Furthermore, these figures show that cumulative rank crossovers are positive and increasing in the case of the returns of the small portfolio relative to both the price-weighted market and big portfolios. In contrast, cumulative rank crossovers are negative and decreasing in the case of the returns of the big portfolio relative to the market, as shown in Figure~\ref{returnsBigFig}.

\subsection{Relative Prices and Relative Returns}
Taken together, our theoretical and empirical results confirm that changes in the relative prices of ranked asset subsets are key determinants of the returns for portfolios of ranked assets relative to the market. In this sense, the relative prices of different ranked asset subsets are an asset pricing factor, since their fluctuations drive relative returns for all price-weighted portfolios of ranked assets. Crucially, the theoretical results of Theorem \ref{relValueThm} and Corollary \ref{relValueCor} that establish this relative price asset pricing factor are achieved under minimal assumptions that should be consistent with essentially any model of asset pricing, meaning that this factor is universal across different economic and financial environments. Our empirical results in Figures \ref{returnsSmallFig}-\ref{relPSmallBigFig} help to confirm this universality.

Our model-free, mathematical approach is similar to the approach of \citet{Fernholz/Stroup:2018}, and the relative price asset pricing factor we describe is related to the price dispersion asset pricing factor analyzed by these authors. As with the price dispersion factor, then, the relative price factor is not subject to the criticisms that have been raised recently about the implausibly high and rising number of factors and anomalies that the empirical asset pricing literature has identified. \citet{Novy-Marx:2014}, \citet{Harvey/Liu/Zhu:2016}, and \citet{Bryzgalova:2016}, for example, have pointed to different econometric issues with many of these proposed factors. In contrast, the relative price asset pricing factor established by Theorem \ref{relValueThm} and Corollary \ref{relValueCor} is not derived using a specific economic model or a specific regression framework, but rather using general mathematical methods that represent asset prices as continuous semimartingales that are consistent with essentially all models and empirical specifications. In this way, our results are naturally immune to these criticisms.

\subsection{Relative Prices with no Rank Effect}
What would commodity futures prices look like in 2018 if the small portfolio had not outperformed the price-weighted market and big portfolios? The decompositions \eqref{relValueEq1}-\eqref{relValueEq3} allow us to answer this difficult question. In particular, these decompositions imply that there is no rank effect only if the relative price of bottom-ranked, lower-priced commodities consistently and rapidly declines. Although the relative price of the bottom ranks did decline slightly from 1974-2018, as can be seen in Figures \ref{relPSmallFig} and \ref{relPSmallBigFig}, this decline was far too small to negate the large gains from rank crossovers, as seen in Figures \ref{returnsSmallFig} and \ref{returnsSmallBigFig}. Using \eqref{relValueEq1}-\eqref{relValueEq3}, we can quantify just how much larger a decline in the relative price of bottom-ranked commodities was needed to eliminate the rank effect from 1974-2018, and then examine the implications of such a decline for commodity futures prices in 2018.

Table \ref{confacTab} reports the counterfactual relative price of bottom-ranked, lower-priced commodity futures needed in 2018 for the small portfolio to not have outperformed the price-weighted market and big portfolios from 1974-2018. In the case of the small portfolio relative to the market, the counterfactual number is obtained by altering the relative price $\Theta_{sc}(T)$ from \eqref{relValueEq1} so that the relative value $\log V_{sc}(T) - \log V_m(T)$ is equal to zero. Implicitly, then, this calculation assumes that the cumulative value of rank crossovers from 1974-2018, measured by $\int_0^T \frac{d\L^c(t)}{2\Theta_{sc}(t)}$, is unchanged from its true value over this time period. A similar assumption aids this same calculation for the small portfolio relative to the big portfolio. In this case, the counterfactual number is obtained by altering the relative price $\Theta_{sc}(T)/\Theta_{bc}(T)$ from \eqref{relValueEq3} so that the relative value $\log V_{sc}(T) - \log V_{bc}(T)$ is equal to zero. 

According to Table \ref{confacTab}, the counterfactual relative price of bottom-ranked, lower-priced commodity futures needed in 2018 to eliminate the rank effect is more than an order of magnitude lower than the actual relative price in 2018. This difference is particularly pronounced for the change in the price of bottom-ranked commodities relative to top-ranked commodities, which corresponds to the return of the small portfolio relative to the big portfolio. This is no surprise given that the small portfolio generates 9.82\% higher returns on average than the big portfolio from 1974-2018, as reported in Table \ref{relReturnsTab}. In both cases, however, the counterfactual relative prices of bottom-ranked commodities are so low that they imply commodity futures prices in 2018 that are radically different from those actually observed in 2018. 

Table \ref{confacTab} plainly shows these significant differences. According to the table, the price of the 15 top-ranked commodity futures in 2018 would have had to be approximately 274 times larger than the price of the 15 bottom-ranked commodity futures in 2018 in order for there to be no small-versus-big rank effect from 1974-2018. This ratio is more than 100 times larger than the actual ratio observed at the end of our data set in 2018. This counterfactual ratio translates into either a much lower counterfactual price of bottom-ranked commodities, a much higher counterfactual price of top-ranked commodities, or some intermediate mix of the two. More specifically, if the price of the 15 bottom-raked commodities in 2018 had been more than 100 times lower than actually observed in 2018, there would have been no rank effect. However, such prices would have implied historically unprecedented lows for any of these commodities. Alternatively, if the price of the 15 top-ranked commodities in 2018 had been more than 100 times higher than actually observed in 2018, there also would have been no rank effect. In this case, the price of any of these commodities would have reached historically unprecedented highs in 2018. In both cases, the counterfactuals in which the rank effects from 1974-2018 shown in Figures \ref{returnsBigFig} and \ref{returnsSmallBigFig} disappear imply radically different and historically unprecedented commodity prices for 2018.


\subsection{The Rank Effect and the Value Anomaly}
The rank effect relative returns reported in Table \ref{relReturnsTab} are similar to the value anomaly for commodity futures described by \citet{Asness/Moskowitz/Pedersen:2013}. These authors rank commodities based on their current price relative to their average price 4.5-5.5 years in the past, and show that portfolios consisting of high-value commodity futures --- those contracts with low prices today relative to the past --- consistently and substantially outperform portfolios consisting of low-value commodity futures --- those contracts with high prices today relative to the past. This ranking of commodity futures based on their current price relative to their past price is similar to our normalized commodity price rankings in which the normalized prices are equalized on the 1969 data start date, as discussed in Section \ref{sec:data}. 

In order to quantify the extent of this similarity, in Table \ref{valueTab} we report the results of time series regressions of value anomaly excess returns similar to those uncovered by \citet{Asness/Moskowitz/Pedersen:2013} on the returns for the small portfolio relative to the market and the small portfolio relative to the big portfolio. For the value anomaly, each month we rank commodity futures prices based on their current price relative to their average price 4.5-5.5 years in the past. We then examine the return of a portfolio that price-weights each of the bottom-ranked half of commodity futures relative to a portfolio that price-weights each of the top-ranked commodity futures, with both portfolios rebalanced each month. According to the results in Table \ref{valueTab}, both versions of the rank effect are similar to the value anomaly for commodities.


The results in Table \ref{valueTab} imply that much of the value anomaly for commodity futures uncovered by \citet{Asness/Moskowitz/Pedersen:2013} is a direct consequence of the approximate stability of the price of bottom-ranked commodities relative to top-ranked commodities from 1974-2018, as shown in Figure \ref{relPSmallBigFig}. Furthermore, Table \ref{confacTab} implies that for this value anomaly to not have existed, commodity futures prices in 2018 would have had to be radically different than what they were. Indeed, the relative price of bottom-ranked commodities would have had to be more than an order of magnitude lower than they were. This result offers a novel interpretation of the value anomaly for commodity futures, and implies that any explanation of this anomaly must also explain the approximate stability of ranked relative commodity prices. The value anomaly would not have existed only if there had been a dramatic and unprecedented split in the price of bottom- and top-ranked commodity futures prices from 1974-2018.

\subsection{The Rank Effect and Efficient Markets}
The relative return decompositions of Theorem \ref{relValueThm} and Corollary \ref{relValueCor} reveal a novel dichotomy for markets in which dividends and asset entry/exit over time play small roles. On the one hand, the relative prices of bottom-ranked, lower-priced assets may be stable over time, in which case \eqref{intuitiveEq} implies that predictable excess returns exist for portfolios of those lower-ranked assets. This is the scenario we observe for commodity futures in Figures \ref{returnsSmallFig}-\ref{relPSmallBigFig}. In such markets, fluctuations in the relative prices of ranked asset subsets are linked to excess returns via the accounting identities \eqref{relReturnEq1}-\eqref{relReturnEq3}. In a standard equilibrium model of asset pricing, these predictable excess returns may exist only if they are compensation for risk. This risk, in turn, is defined by an endogenous stochastic discount factor that is linked to the marginal utility of economic agents. It is not clear, however, how marginal utility might be linked to the relative prices of bottom-ranked, lower-priced assets. It is also not clear why marginal utility should be higher when the relative price of lower-ranked assets falls, yet these are necessary implications of any standard asset pricing model in which relative prices are asymptotically stable, according to our results.

On the other hand, the relative prices of bottom-ranked, lower-priced assets may not be stable over time. In this case, the relative price of lower-ranked assets is consistently and rapidly falling, and the decomposition \eqref{intuitiveEq} no longer predicts excess returns. Instead, this decomposition predicts falling relative prices that cancel out the non-negative rank crossovers in \eqref{intuitiveEq} on average over time. The relative return decompositions of Theorem \ref{relValueThm} and Corollary \ref{relValueCor} make no predictions about the stability of the relative prices of ranked asset subsets, so this possibility is not ruled out by our theoretical results. Nonetheless, it is notable that our empirical results for commodity futures are inconsistent with this no-stability, no-excess-returns market structure. In light of these results, future work that examines the long-run properties of ranked relative prices in different asset markets and attempts to distinguish between the two sides of this dichotomy --- asymptotically stable markets with predictable excess returns versus asymptotically unstable markets without predictable excess returns --- may yield interesting new insights.

This novel dichotomy has several implications. First, it provides a new interpretation of market efficiency in terms of a constraint on cross-sectional asset price dynamics and the relative price of ranked asset subsets. Either the relative price of bottom-ranked, lower-priced assets falls consistently and rapidly over time, consistent with this constraint, or there exists a market inefficiency or a risk factor based on the decompositions \eqref{relReturnEq1}-\eqref{relReturnEq3}. Second, it raises the possibility that the well-known size effect for equities \citep{Banz:1981,Fama/French:1993} may be interpretable in terms of the dynamics of the relative size of bottom-ranked, smaller companies. To the extent that the decompositions of Theorem \eqref{relValueThm} and Corollary \ref{relValueCor} are universal, the predictable excess returns underlying the size anomaly should be linkable to a violation of the constraint on cross-sectional asset price dynamics and the relative value of smaller companies as discussed above. 

Under this interpretation, the size effect for equities may be a simple consequence of the decomposition \eqref{intuitiveEq} and the stability of the relative size of lower-ranked, smaller companies. This interpretation is supported by the empirical results of both \citet{Fernholz:1998} and \citet{Fama/French:2007}. These authors show that small companies growing into large companies --- much like the rank crossovers of \eqref{intuitiveEq} --- explains the majority of the size effect. Future work that more closely examines the size effect using the decompositions of Theorem \ref{relValueThm} and Corollary \ref{relValueCor} may yield interesting new insights about the relationship between the size distribution of company market capitalizations and the size anomaly.

\vskip 50pt

\section{Conclusion} \label{sec:conclusion}

We represent asset prices as general continuous semimartingales and show that the returns for portfolios of ranked assets can be decomposed into rank crossovers and changes in the relative price of ranked asset subsets. Because of the minimal assumptions underlying this result, our decomposition is little more than an accounting identity that is consistent with essentially any asset pricing model. We show that rank-crossovers are approximately constant over time and positive for portfolios of bottom-ranked assets. This conclusion implies that in a closed, dividend-free market in which the relative price of bottom-ranked assets is approximately constant, a portfolio of bottom-ranked assets must necessarily outperform the market portfolio over time --- a rank effect emerges. We confirm our theoretical predictions using commodity futures, and show that a portfolio of bottom-ranked, lower-priced commodities consistently and substantially outperformed the price-weighted market portfolio from 1974-2018. We show that if this rank effect did not exist, then bottom-ranked prices would have had to be much lower relative to top-ranked prices in 2018, implying radically different commodity prices than those actually observed.

\pagebreak

\begin{spacing}{1.2}

\bibliographystyle{chicago}
\bibliography{econ}

\end{spacing}

\vskip 50pt

\begin{spacing}{1.1}

\appendix
\section{Proofs} \label{sec:proofs}

This appendix presents the proof of Theorem \ref{relValueThm}. Note that Corollary \ref{relValueCor} is an immediate consequence of this theorem, since \eqref{relValueEq3} is obtained by subtracting \eqref{relValueEq2} from \eqref{relValueEq1}.

\begin{proofThm}
Theorem \ref{relValueThm} follows from the more general results in Theorem 3.8 and Example 3.9 of \citet{Karatzas/Ruf:2017}. For any $1 \leq c \leq N$, let $F_c : \R^N \to \R$ be the function
\begin{equation*}
 F_c(x_1, \ldots, x_N) = \sum_{i=1}^c x_i.
\end{equation*}
The function $F_c$ is regular for the ranked asset share processes $\theta_{(1)}, \ldots, \theta_{(N)}$ according to Definition 3.1 and Theorem 3.8 of \citet{Karatzas/Ruf:2017}, since it is continuous and concave and we have assumed that prices are always positive. Because we have assumed that there are no triple points, it follows that the local time processes $\L_{\theta_{(k)} - \theta_{(k + \ell)}} = 0$, for all $\ell \geq 2$ and all $k = 1, \ldots, N-1$. Therefore, Example 3.9 and Proposition 4.8 of \citet{Karatzas/Ruf:2017} imply that
\begin{equation*}
\begin{aligned}
 V_{bc}(T)/V_m(T) & = \frac{F_c(\theta_{(1)}(T), \ldots, \theta_{(N)}(T))}{F_c(\theta_{(1)}(0), \ldots, \theta_{(N)}(0))} \exp \left( \int_0^T \frac{d\Gamma^{F_c}(t)}{F_c(\theta_{(1)}(t), \ldots, \theta_{(N)}(t))} \right) \\
 & = \frac{\Theta_{bc}(T)}{\Theta_{bc}(0)} \exp \left( \int_0^T \frac{d\Gamma^{F_c}(t)}{\Theta_{bc}(t)} \right),
\end{aligned}
\end{equation*}
for all $T$, with
\begin{equation*}
 \Gamma^{F_c}(t) = -\frac{1}{2}\L^c(t),
\end{equation*}
for all $t$. The result \eqref{relValueEq2} follows. A similar argument using the function
\begin{equation*}
 F_c(x_1, \ldots, x_N) = \sum_{i=c+1}^N x_i,
\end{equation*}
establishes \eqref{relValueEq1} for the small portfolio $s_c$.
\end{proofThm}

\section{Asset Entry and Exit} \label{sec:entry}

This appendix considers the impact of asset entry and exit on the decompositions in Theorem \ref{relValueThm} and Corollary \ref{relValueCor}. In the main analysis, we assumed that the number of assets in the market was fixed at $N$ and that these assets maintained positive prices at all times $t$. We wish to relax that assumption.

Consider a market that consists of $N > 1$ assets, but assume that only the top $n < N$ ranked assets are tradable at each time $t$. As discussed in Section \ref{sec:rank}, assets are ranked by their prices, so the top $n$ assets that are tradable at each time $t$ are the top $n$ most expensive assets at that time $t$. An asset \emph{enters} the tradable market when it crosses over and joins the top $n$ ranks, and it \emph{exits} the tradable market when it crosses over and drops out of the top $n$ ranks. 

Because only the top $n$ ranked assets are tradable at each time $t$, it is useful to modify the definition of ranked relative prices $\theta_{(k)}$ from Section \ref{sec:rank} in terms of tradable assets only. For all $k = 1, \ldots, n$, let
\begin{equation} \label{rankThetaApp}
 \theta'_{(k)}(t) = \frac{p_{(k)}(t)}{\sum_{\ell=1}^n p_{(\ell)}(t)}.
\end{equation}
We modify the definitions of the relative prices of bottom- and top-ranked asset subsets $\Theta_{sc}$ and $\Theta_{bc}$ similarly, so that for any cutoff $1 \leq c < n$,
\begin{equation*}
 \Theta'_{sc}(t) = \sum_{k=c+1}^n \theta'_{(k)}(t),
\end{equation*}
and, for any cutoff $1 \leq c \leq n$,
\begin{equation*}
 \Theta'_{bc}(t) = \sum_{k=1}^c \theta'_{(k)}(t).
\end{equation*}
The big portfolio $b_c$ maintains the same definition as in Section \ref{sec:port}, with the extra restriction that the rank cutoff $c$ is weakly less than the number of tradable assets $n$. This restriction ensures that the big portfolio only holds assets that are tradable at each time $t$, and that it immediately sells any assets that become non-tradable by dropping out of the top $n$ ranks. 

The small portfolio $s_c$ and market portfolio $m$ also require modified definitions in this setup with asset entry and exit. Let the modified market portfolio $m'(t) = (m'_{1}(t), \ldots, m'_{N}(t))$ be defined by
\begin{equation*}
 m'_{k}(t) = \left\{\begin{array}{ll}
        1/\Theta_{bc}(0), & \text{for }  k \leq n \\
        0, & \text{for } k > n, \\
        \end{array}\right.
\end{equation*}
for all $k = 1, \ldots, N$.\footnote{The portfolio $m'$ holds $1/\Theta_{bc}(0)$ shares of the assets rather than one share in order to ensure that the initial condition \eqref{valueNormalizationEq} holds. The portfolio $m'$ and a portfolio that holds one share of the same assets have identical returns.} The portfolio $m'$ holds equal shares of each of the $n$ tradable assets in the economy at each time $t$. Because these $n$ tradable assets are also the $n$ top-ranked assets, the modified market portfolio $m'$ is equivalent to the big portfolio $b_n$ defined in Section \ref{sec:port}. The portfolio $m'$ has weights equal to the modified ranked relative prices $\theta'_{(k)}$ defined in \eqref{rankThetaApp}, since
\begin{equation} \label{marketWeightsApp}
 w^{m'}_{k}(t) = \frac{p_{(k)}(t)}{\Theta_{bc}(0)V_{m'}(t)} = \frac{\Theta_{bc}(0)p_{(k)}(t)}{\Theta_{bc}(0)\sum_{\ell=1}^n p_{(\ell)}(t)} = \theta'_{(k)}(t),
\end{equation}
for all $k = 1, \ldots, n$ and all $t$. 

For any rank cutoff $1 \leq c < n$, let the modified small portfolio $s'_c(t) = (s'_{c, 1}(t), \ldots, s'_{c, N}(t))$ be defined by
\begin{equation*}
 s'_{c, k}(t) = \left\{\begin{array}{ll}
        0, & \text{for }  k \leq c \\
        \frac{1}{\Theta_{sc}(0) - \Theta_{sn}(0)}, & \text{for } n \geq k > c \\
        0, & \text{for } k > n,
        \end{array}\right.
\end{equation*}
for all $k = 1, \ldots, N$.\footnote{As in Section \ref{sec:port}, the portfolio $s'_c$ holds $\frac{1}{\Theta_{sc}(0) - \Theta_{sn}(0)}$ shares of the assets rather than one share in order to ensure that the initial condition \eqref{valueNormalizationEq} holds. The portfolio $s'_c$ and a portfolio that holds one share of the same assets have identical returns.} The portfolio $s'_c$ purchases equal numbers of shares of each of the $n - c$ smallest, lowest-priced tradable assets in the market. Notably, this portfolio does not hold any shares of the non-tradable assets below rank $n$, a modification that distinguishes $s'_c$ from the standard small portfolio $s_c$ of Section \ref{sec:port}. The modified small portfolio $s'_c$ has weights given by
\begin{equation*}
 w^{s'c}_{k}(t) = \left\{\begin{array}{ll}
        0, & \text{for }  k \leq c \\
        \frac{\theta'_{(k)}(t)}{\Theta'_{sc}(t)}, & \text{for } n \geq k > c \\
        0, & \text{for } k > n
        \end{array}\right.
\end{equation*}
for all $k = 1, \ldots, N$ and all $t$. Thus, the portfolio $s_c$ price-weights each of the $n - c$ tradable assets it holds.

In this setup that includes asset entry and exit, we have the following extension of Theorem \ref{relValueThm}.

\begin{thm} \label{relValueThmApp}
For all $1 \leq c < n$, the modified small portfolio $s'_c$ has a value process $V_{s'c}$ that satisfies
\begin{equation} \label{relValueEq1App}
 \log V_{s'c}(T) - \log V_{m'}(T) = \frac{1}{2} \int_0^T \left( \frac{d\L^n(t)}{\Theta_{bn}(t)} + \frac{d\L^c(t) - d\L^n(t)}{\Theta_{bn}(t) - \Theta_{bc}(t)} \right) + \log \Theta'_{sc}(T) - \log \Theta'_{sc}(0),
\end{equation}
for all $T$. For all $1 \leq c \leq n$, the big portfolio $b_c$ has a value process $V_{bc}$ that satisfies
\begin{equation} \label{relValueEq2App}
 \log V_{bc}(T) - \log V_{m'}(T) =  \frac{1}{2} \int_0^T \left( \frac{d\L^n(t)}{\Theta_{bn}(t)} - \frac{d\L^c(t)}{\Theta_{bc}(t)} \right) + \log \Theta'_{bc}(T) - \log \Theta'_{bc}(0),
\end{equation}
for all $T$.
\end{thm}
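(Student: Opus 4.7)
The plan is to reduce the two claimed decompositions to Theorem~\ref{relValueThm} together with one additional application of the generating-function machinery of \citet{Karatzas/Ruf:2017} used in its proof. The decisive observation is that the modified market portfolio $m'$ places equal share holdings across the top $n$ ranks, which is exactly the structure of the big portfolio $b_n$ from Section~\ref{sec:port}; after the common initial-value normalization \eqref{valueNormalizationEq}, $V_{m'}(t) = V_{bn}(t)$ for all $t$. Every statement involving $V_{m'}$ then rewrites as a statement involving $V_{bn}$, which sits inside the scope of Theorem~\ref{relValueThm}.

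For \eqref{relValueEq2App}, I would apply \eqref{relValueEq2} twice, once with cutoff $c$ and once with cutoff $n$, and subtract. The two local-time integrals combine directly into $\tfrac{1}{2}\int_0^T (d\L^n/\Theta_{bn} - d\L^c/\Theta_{bc})$. The ranked-price terms combine into $\log(\Theta_{bc}(T)/\Theta_{bn}(T)) - \log(\Theta_{bc}(0)/\Theta_{bn}(0))$; since the denominator $\sum_i p_i$ common to $\Theta_{bc}$ and $\Theta_{bn}$ cancels, this ratio equals $\Theta'_{bc}(t)$ by \eqref{rankThetaApp} and the definition of $\Theta'_{bc}$. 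This is exactly \eqref{relValueEq2App}.

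For \eqref{relValueEq1App}, I would apply the Karatzas-Ruf framework directly to the linear ``middle'' function $G(x_1,\ldots,x_N) = \sum_{i=c+1}^n x_i$, mirroring the proof of Theorem~\ref{relValueThm}. This $G$ is continuous and concave (indeed, linear), hence regular for $(\theta_{(1)},\ldots,\theta_{(N)})$ in the sense of \citet{Karatzas/Ruf:2017}, and it generates --- up to the scaling in the appendix --- the modified small portfolio $s'_c$, since it weights coordinates $c+1,\ldots,n$ equally. Writing $G = F_n - F_c$ with $F_c(x) = \sum_{i=1}^c x_i$ as in the main proof, linearity of the Karatzas-Ruf correction in the generating function gives $\Gamma^G = \Gamma^{F_n} - \Gamma^{F_c} = -\tfrac{1}{2}\L^n + \tfrac{1}{2}\L^c$, with Assumption~\ref{triplePoint} again killing the mixed local times $\L_{\theta_{(k)} - \theta_{(k+\ell)}}$ for $\ell \geq 2$. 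Since $G(\theta(t)) = \Theta_{bn}(t) - \Theta_{bc}(t)$, Example~3.9 and Proposition~4.8 of \citet{Karatzas/Ruf:2017} yield
\begin{align*}
 \log V_{s'c}(T) - \log V_m(T) &= \log \bigl( \Theta_{bn}(T) - \Theta_{bc}(T) \bigr) - \log \bigl( \Theta_{bn}(0) - \Theta_{bc}(0) \bigr) \\
 &\qquad + \frac{1}{2} \int_0^T \frac{d\L^c(t) - d\L^n(t)}{\Theta_{bn}(t) - \Theta_{bc}(t)}.
\end{align*}
Subtracting \eqref{relValueEq2} at cutoff $n$ --- which, by the identification above, equals $\log V_{m'}(T) - \log V_m(T)$ --- and using $(\Theta_{bn} - \Theta_{bc})/\Theta_{bn} = 1 - \Theta'_{bc} = \Theta'_{sc}$ to simplify the log terms then gives exactly \eqref{relValueEq1App}.

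The main obstacle is the identification $\Gamma^G = \tfrac{1}{2}(\L^c - \L^n)$; this is the only genuinely new input beyond Theorem~\ref{relValueThm} and Example~3.9 of \citet{Karatzas/Ruf:2017}. It rests on linearity of the Karatzas-Ruf correction in the generating function, together with the fact that the partial derivatives of $G$ with respect to the ranked coordinates have jumps only at the rank boundaries $(c,c+1)$ and $(n,n+1)$, so only $\L^c$ and $\L^n$ can contribute. Combined with Assumption~\ref{triplePoint} to rule out cross-rank local times, the remainder of the argument is algebraic bookkeeping identical in spirit to the main proof.
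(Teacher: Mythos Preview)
Your proposal is correct and follows essentially the same route as the paper: identify $m'$ with $b_n$, obtain \eqref{relValueEq2App} by differencing two instances of \eqref{relValueEq2}, and obtain \eqref{relValueEq1App} by applying the Karatzas--Ruf machinery to the linear generating function $G(x)=\sum_{i=c+1}^n x_i$ (the paper's $F_{cn}$) and then subtracting the $b_n$ decomposition. The only cosmetic difference is that the paper reads off $\Gamma^{F_{cn}}=\tfrac{1}{2}(\L^c-\L^n)$ directly from Example~3.9 of \citet{Karatzas/Ruf:2017}, whereas you derive it via the linearity argument $G=F_n-F_c$; both are valid and lead to the identical computation.
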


\begin{proof}
First, note that \eqref{relValueEq2App} is an immediate consequence of \eqref{relValueEq2} from Theorem \ref{relValueThm}. This follows because the modified market portfolio $m'$ is equivalent to the big portfolio $b_n$, and hence we can use \eqref{relValueEq2} to characterize relative returns for both $b_c$ and $m'$ and then take the difference to get \eqref{relValueEq2App}.

For \eqref{relValueEq1App}, let $F_{cn} : \R^N \to \R$ be the function 
\begin{equation*}
 F_{cn}(x_1, \ldots, x_N) = \sum_{i=c+1}^n x_i,
\end{equation*}
where $1 \leq c < n$. The function $F_{cn}$ is regular for the ranked asset share processes $\theta_{(1)}, \ldots, \theta_{(N)}$ according to Definition 3.1 and Theorem 3.8 of \citet{Karatzas/Ruf:2017}, since it is continuous and concave and we have assumed that prices are always positive. Because we have assumed that there are no triple points, it follows that the local time processes $\L_{\theta_{(k)} - \theta_{(k+\ell)}} = 0$, for all $\ell \geq 2$ and all $k = 1, \ldots, N-1$. Therefore, Example 3.9 and Proposition 4.8 of \citet{Karatzas/Ruf:2017} imply that
\begin{align}
 V_{s'c}(T)/V_m(T) & = \frac{F_{cn}(\theta_{(1)}(T), \ldots, \theta_{(N)}(T))}{F_{cn}(\theta_{(1)}(0), \ldots, \theta_{(N)}(0))} \exp \left( \int_0^T \frac{d\Gamma^{F_{cn}}(t)}{F_{cn}(\theta_{(1)}(t), \ldots, \theta_{(N)}(t))} \right) \notag \\
 & = \frac{\Theta_{bn}(T) - \Theta_{bc}(T)}{\Theta_{bn}(0) - \Theta_{bc}(0)} \exp \left( \int_0^T \frac{d\Gamma^{F_{cn}}(t)}{\Theta_{bn}(t) - \Theta_{bc}(t)} \right), \label{relValueEqProof1App}
\end{align}
for all $T$, with
\begin{equation*}
 \Gamma^{F_{cn}}(t) = \frac{\L^c(t) - \L^n(t)}{2},
\end{equation*}
for all $t$. Finally, note that Theorem \ref{relValueThm} applied to the modified market portfolio $m'$ implies that
\begin{equation} \label{relValueEqProof2App}
\log V_{m'}(T) - \log V_m(T) = -\frac{1}{2} \int_0^T \frac{d\L^n(t)}{\Theta_{bn}(t)} + \log \Theta_{bn}(T) - \log \Theta_{bn}(0),
\end{equation}
for all $T$. The result \eqref{relValueEq1App} follows from subtracting \eqref{relValueEqProof2App} from the log of \eqref{relValueEqProof1App}.
\end{proof}

We can extend Corollary \ref{relValueCor} to this setup that includes asset entry and exit by subtracting \eqref{relValueEq2App} from \eqref{relValueEq1App}.

\begin{cor} \label{relValueCorApp}
For all $1 \leq c < n$, the value of the modified small portfolio $s'_c$ relative to the big portfolio $b_c$ satisfies
\begin{equation} \label{relValueEq3App}
\begin{aligned}
 \log V_{s'c}(T) - \log V_{bc}(T) & = \frac{1}{2} \int_0^T \left( \frac{d\L^c(t)}{\Theta_{bc}(t)} + \frac{d\L^c(t) - d\L^n(t)}{\Theta_{bn}(t) - \Theta_{bc}(t)} \right)  \\
 & \qquad \qquad + \log \big( \Theta'_{sc}(T)/\Theta'_{bc}(T) \big) - \log \big( \Theta'_{sc}(0)/\Theta'_{bc}(0) \big),
\end{aligned}
\end{equation}
for all $T$.
\end{cor}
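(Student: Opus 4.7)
The plan is to obtain Corollary \ref{relValueCorApp} as a direct algebraic consequence of Theorem \ref{relValueThmApp}, in exact parallel with how Corollary \ref{relValueCor} was derived from Theorem \ref{relValueThm}. Specifically, I would simply subtract the big-portfolio decomposition \eqref{relValueEq2App} from the modified-small-portfolio decomposition \eqref{relValueEq1App}. No new semimartingale calculus, local-time analysis, or generating-function argument is needed, since all of the heavy lifting has already been done inside the proof of Theorem \ref{relValueThmApp} (where $F_{cn} = F_n - F_c$ is used to resolve the middle stratum and the no-triple-points hypothesis is invoked to kill $\L_{\theta_{(k)}-\theta_{(k+\ell)}}$ for $\ell \geq 2$).

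Carrying out the subtraction, the left-hand sides combine as $\bigl(\log V_{s'c}(T) - \log V_{m'}(T)\bigr) - \bigl(\log V_{bc}(T) - \log V_{m'}(T)\bigr) = \log V_{s'c}(T) - \log V_{bc}(T)$, which is the quantity of interest. On the right-hand side, the integrand $\tfrac{1}{2}\tfrac{d\L^n(t)}{\Theta_{bn}(t)}$ appears with the same sign in both \eqref{relValueEq1App} and \eqref{relValueEq2App} and therefore cancels, while the residual integrand from \eqref{relValueEq1App}, namely $\tfrac{1}{2}\tfrac{d\L^c(t)-d\L^n(t)}{\Theta_{bn}(t)-\Theta_{bc}(t)}$, combines with the sign-flipped $-\bigl(-\tfrac{1}{2}\tfrac{d\L^c(t)}{\Theta_{bc}(t)}\bigr) = \tfrac{1}{2}\tfrac{d\L^c(t)}{\Theta_{bc}(t)}$ from \eqref{relValueEq2App} to produce exactly the integrand in \eqref{relValueEq3App}. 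The logarithmic boundary terms $\log\Theta'_{sc}(T)-\log\Theta'_{sc}(0)$ and $-\bigl(\log\Theta'_{bc}(T)-\log\Theta'_{bc}(0)\bigr)$ recombine via the quotient rule into $\log(\Theta'_{sc}(T)/\Theta'_{bc}(T)) - \log(\Theta'_{sc}(0)/\Theta'_{bc}(0))$, matching the statement.

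The main obstacle is really just sign-bookkeeping, since the cumulative crossover process $\L^n$ enters both decompositions in a symmetric way — it governs the common entry/exit effect of assets crossing in and out of the tradable top-$n$ universe, felt identically by $V_{s'c}$ and $V_{bc}$ through their shared reference $V_{m'}$. Its cancellation is the conceptual reason Corollary \ref{relValueCorApp} retains only the \emph{internal} rank cutoff $c$ (through $\tfrac{d\L^c}{\Theta_{bc}}$) plus the middle-stratum turnover $d\L^c - d\L^n$ weighted by $\Theta_{bn}-\Theta_{bc}$, and does not require an extra term accounting for entry/exit through rank $n$ itself. Once this cancellation is verified by direct arithmetic, the corollary follows immediately.
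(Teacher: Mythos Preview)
Your proposal is correct and matches the paper's own approach exactly: the corollary is obtained simply by subtracting \eqref{relValueEq2App} from \eqref{relValueEq1App}, with the $\tfrac{d\L^n(t)}{\Theta_{bn}(t)}$ terms cancelling and the boundary logarithms combining as you describe. No additional argument is needed.
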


The decompositions \eqref{relValueEq1App}, \eqref{relValueEq2App}, and \eqref{relValueEq3App} from Theorem \ref{relValueThmApp} and Corollary \ref{relValueCorApp} extend the results of Theorem \ref{relValueThm} and Corollary \ref{relValueCor} to a setup that includes asset entry and exit over time. The most significant difference between \eqref{relValueEq1} and \eqref{relValueEq1App} is the local time process $d\L^n$, which enters into \eqref{relValueEq1App} negatively and measures the impact of asset exit from the modified small and market portfolios. In particular, the difference $d\L^n(\frac{1}{\Theta_{bn}} - \frac{1}{\Theta_{bn} - \Theta_{bc}})$, which is always negative, measures the impact of asset exit on the modified small portfolio $s'_c$ relative to the modified market portfolio $m'$. Thus, by comparing this term to the non-negative local time process $d\L^c$, we are able to compare the effects of asset exit and rank crossovers from bottom-ranks to top-ranks on cumulative relative portfolio returns. 

Similar conclusions emerge when comparing \eqref{relValueEq2App} and \eqref{relValueEq3App} to \eqref{relValueEq2} and \eqref{relValueEq3}. The local time process $d\L^n$ appears in both \eqref{relValueEq2App} and \eqref{relValueEq3App}, but not in either \eqref{relValueEq2} and \eqref{relValueEq3}. As with \eqref{relValueEq1App}, this local time process measures the effect of asset exit over time. In the case of the decomposition \eqref{relValueEq2App}, the process $d\L^n$ measures the effect of asset exit on the modified market portfolio $m'$, while in the case of \eqref{relValueEq3App}, it measures the effect of asset exit on the modified small portfolio $s'_c$.

Theorem \ref{relValueThmApp} and Corollary \ref{relValueCorApp} show that asset entry and exit do not overturn the basic insights of our results in Theorem \ref{relValueThm} and Corollary \ref{relValueCor}. Indeed, rank crossovers and the relative prices of ranked asset subsets remain key determinants of the relative returns of portfolios of bottom- and top-ranked assets even after introducing entry/exit. As for the rank effect, the decompositions \eqref{relValueEq1App}-\eqref{relValueEq3App} show that portfolios of bottom-ranked assets will outperform portfolios of top-ranked assets over time if both the relative price of bottom-ranked assets is approximately stable and the intensity of asset exit is not too great. Therefore, the conclusions regarding market efficiency discussed in Section \ref{sec:discussion} remain valid in this extension, with the added caveat that asset entry and exit over time must not be so frequent as to dominate the effects of rank crossovers over time. Ultimately, the intensity of rank crossovers versus asset entry and exit are empirical questions that depend on the specific market being examined.

\end{spacing}

\pagebreak

\begin{table}[H]
\vspace{0pt}
\begin{center}
\setlength{\extrarowheight}{3pt}
\begin{tabular} {| l ||c|c|c|}

\hline

      Commodity  & Exchange           & Start      &   Average and Standard Deviation      \\
                          &   where Traded   & Date    &   of Log Price Changes                          \\

\hline

  Soybean Meal      &  CBOT        &  1/1969  & 0.034 (0.303) \\
  Soybean Oil          &  CBOT       &  1/1969  & 0.027 (0.289)  \\
  Soybeans             &  CBOT        &  1/1969  & 0.027 (0.261)   \\
  Wheat                   &  CBOT        &  1/1969  & 0.027 (0.292)  \\
  Corn                     &  CBOT        &  1/1970  & 0.025 (0.260)  \\
  Live Hogs             &  CME          &  1/1970  & 0.022 (0.330)  \\
  Live Cattle            &  CME          &  1/1971  & 0.028 (0.201)  \\
  Cotton                  &  NYBOT      &  1/1973  & 0.018 (0.288)  \\
  Orange Juice       &  CEC           &  1/1973  & 0.029 (0.305)  \\
  Platinum              &  NYMEX      &  1/1973  & 0.041 (0.278)  \\
  Silver                   &  COMEX      &  1/1973  & 0.046 (0.320)  \\
  Coffee                  &  CSC           &  1/1974  & 0.013 (0.360)  \\
  Lumber                &  CME           &  1/1974  & 0.035 (0.326)  \\
  Gold                     &  COMEX     &  1/1975  & 0.045 (0.204)  \\
  Oats                     &  CBOT        &  1/1975  & 0.009 (0.345)  \\
  Sugar                   &  CSC          &  1/1975  & -0.032 (0.408)  \\
  Wheat, K.C.         &  KCBT        &  1/1977  & 0.016 (0.251)  \\
  Feeder Cattle      &  CME          &  1/1978  & 0.028 (0.169)  \\
  Heating Oil          &  NYMEX     &  1/1980  & 0.024 (0.328)  \\
  Cocoa                 &  CSC           &  1/1981  & 0.008 (0.301)  \\
  Wheat, Minn.      &  MGE          &  1/1981  & 0.007 (0.233)  \\
  Palladium            &  NYMEX     &  1/1983  & 0.065 (0.326)  \\
  Crude Oil            &  NYMEX     &  1/1984  & 0.026 (0.354)  \\
  RBOB Gasoline  &  NYMEX     &  1/1985 & 0.034 (0.348)  \\
  Rough Rice         &  CBOT       &  1/1987  & 0.035 (0.277)  \\
  Copper                &  COMEX    &  1/1989  & 0.027 (0.256)  \\
  Natural Gas         &  NYMEX    &  1/1991  & 0.014 (0.515)  \\
  Milk                      &  CME         &  9/1997  & 0.016 (0.277)  \\
  Brent Crude Oil   &  ICE           &  8/2008  & -0.042 (0.332)  \\
  Brent Gasoil        &  ICE           &  8/2008  & -0.047 (0.287)  \\

\hline

\end{tabular}
\end{center}
\vspace{-5pt} \caption{List of commodity futures contracts along with the exchange where each commodity is traded, the date each commodity started trading, and the annualized average and standard deviation (in parentheses) of daily log price changes for each commodity.}
\label{commInfoTab}
\end{table}

\begin{table}[H]
\vspace{15pt}
\begin{center}
\setlength{\extrarowheight}{3pt}
\begin{tabular} {|c||c|c|c|}

\hline

     & Price-Weighted (Market)   & Small                 & Big \\
     & Portfolio                            & Portfolio             & Portfolio           \\

\hline

  1974-2018       &                          3.44\%    \hspace{8pt} (17.88)    &                           10.37\%   \hspace{3pt} (18.72)    &                             0.40\%     \hspace{8pt} (20.15)   \\
  1974-1980       &                          10.65\%  \hspace{3pt} (32.58)    &                           15.11\%   \hspace{3pt} (34.97)    &                             7.80\%     \hspace{8pt} (35.01)  \\
  1980-1990       &                          -3.19\%  \hspace{4pt} (15.96)     &                           8.17\%     \hspace{8pt} (16.34)    &                            -8.19\%     \hspace{4pt} (18.66)  \\
  1990-2000       &  \hspace{-10pt}  0.28\%    \hspace{8pt} (8.03)     &                           5.54\%     \hspace{8pt} (10.09)    &                             -2.21\%     \hspace{4pt} (10.15)  \\
  2000-2010       &                           7.99\%    \hspace{8pt} (16.15)   &                           15.89\%   \hspace{3pt} (14.93)    &                              4.98\%     \hspace{8pt} (19.29)  \\
  2010-2018       &                           1.68\%    \hspace{8pt} (12.99)   &                           7.85\%     \hspace{8pt} (14.88)    &                             -0.59\%     \hspace{4pt} (14.20)  \\

\hline

\end{tabular}
\end{center}
\vspace{-5pt} \caption{Annualized average and standard deviation (in parentheses) of monthly returns for price-weighted (market) portfolio, small (bottom-ranked), and big (top-ranked) portfolios, 1974-2018.}
\label{returnsTab}
\end{table}

\begin{table}[H]
\vspace{50pt}
\begin{center}
\setlength{\extrarowheight}{3pt}
\begin{tabular} {|c||cc|cc|}

\hline

     &  \multicolumn{2}{c|}{Small Portfolio relative to} & \multicolumn{2}{c|}{Small Portfolio relative to}  \\
     &  \multicolumn{2}{c|}{Price-Weighted (Market) Portfolio}             & \multicolumn{2}{c|}{Big Portfolio}           \\
     & Average (st. dev.) & Sharpe ratio       & Average (st. dev.) & Sharpe ratio    \\

\hline

  1974-2018       &                              6.93\%    \hspace{7pt} (12.74)  & 0.54           &    9.97\%   \hspace{7pt} (18.53)  & 0.54  \\
  1974-1980       &                              4.46\%    \hspace{7pt} (16.92)  & 0.26           &    7.30\%   \hspace{7pt} (26.41)  & 0.28 \\
  1980-1990       &                             11.35\%  \hspace{2pt} (13.78)   & 0.82           &    16.35\% \hspace{2pt} (19.66)  & 0.83 \\
  1990-2000       &   \hspace{-10pt}   5.26\%    \hspace{7pt} (8.55)     & 0.62           &    7.76\%   \hspace{7pt} (12.97)  & 0.60 \\
  2000-2010       &                             7.91\%    \hspace{7pt} (13.08)   & 0.60           &    10.91\% \hspace{2pt} (18.39)  & 0.59 \\
  2010-2018       &   \hspace{-10pt}   6.17\%    \hspace{7pt} (9.93)     & 0.62           &    8.44\%   \hspace{7pt} (13.89)  & 0.61 \\

\hline

\end{tabular}
\end{center}
\vspace{-5pt} \caption{Annualized average, standard deviation (in parentheses), and Sharpe ratio of monthly returns for small (bottom-ranked) portfolio relative to price-weighted (market) and big (top-ranked) portfolios, 1974-2018.}
\label{relReturnsTab}
\end{table}

\begin{table}[H]
\vspace{15pt}
\begin{center}
\setlength{\extrarowheight}{3pt}
\begin{tabular} {|c||c|c|}

\hline

     &  Ratio of Bottom-Ranked       & Ratio of Bottom-Ranked       \\
     &  Prices relative to All Prices      &  Prices relative to                   \\
     &                                                 & Top-Ranked Prices                   \\

\hline

  Counterfactual, 2018      &  0.01275       &   0.00365    \\
  Actual, 2018                    &  0.26860       &   0.36724    \\

\hline

\end{tabular}
\end{center}
\vspace{-5pt} \caption{Counterfactual relative price of bottom-ranked commodities in 2018 needed for the small (bottom-ranked) portfolio to not have outperformed the price-weighted (market) and big (top-ranked) portfolios from 1974-2018, reported together with the actual relative price in 2018.}
\label{confacTab}
\end{table}


  



\begin{table}[H]
\vspace{50pt}
\begin{center}
\setlength{\extrarowheight}{3pt}
\begin{tabular} {|c||c|c|}

\hline

     &  Value Anomaly                         &  Value Anomaly         \\

\hline

  Intercept                                        &   0.324*  \hspace{12pt} (0.145)                             &  0.477*  \hspace{12pt} (0.211)                           \\ [0.2cm]
  
  Small Portfolio relative to               &  \multirow{2}{*}{0.270*** \hspace{2pt} (0.028)}  &                                                                          \\ [-0.1cm]
  Price-Weighted (Market) Portfolio &                                                                            &                                                                           \\  [0.2cm]

  Small Portfolio relative to              &                                                                            &  \multirow{2}{*}{0.380*** \hspace{2pt} (0.040)}  \\ [-0.1cm]
  Big Portfolio                                  &                                                                            &                                                                              \\

\hline  
  
  Adjusted $R^2$                            &   0.15                                                                   &   0.14                                  \\

\hline

\end{tabular}
\end{center}
\vspace{-5pt} \caption{Regression results with monthly value anomaly excess returns as dependent variable and rank effect excess returns as independent variables, 1974-2018. Standard errors are reported in parentheses.}
\label{valueTab}
\end{table}

\newpage

\begin{figure}[H]
\begin{center}
\vspace{-20pt}
\hspace{-20pt}\scalebox{0.65}{ {\includegraphics{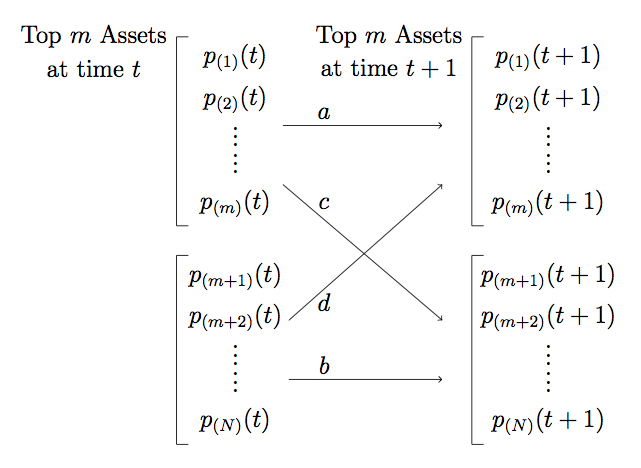}}}
\end{center}
\vspace{-24pt} \caption{A simple explanation of the rank effect.}
\label{simpleFig}
\end{figure}

\begin{figure}[H]
\begin{center}
\vspace{-4pt}
\hspace{-20pt}\scalebox{0.64}{ {\includegraphics{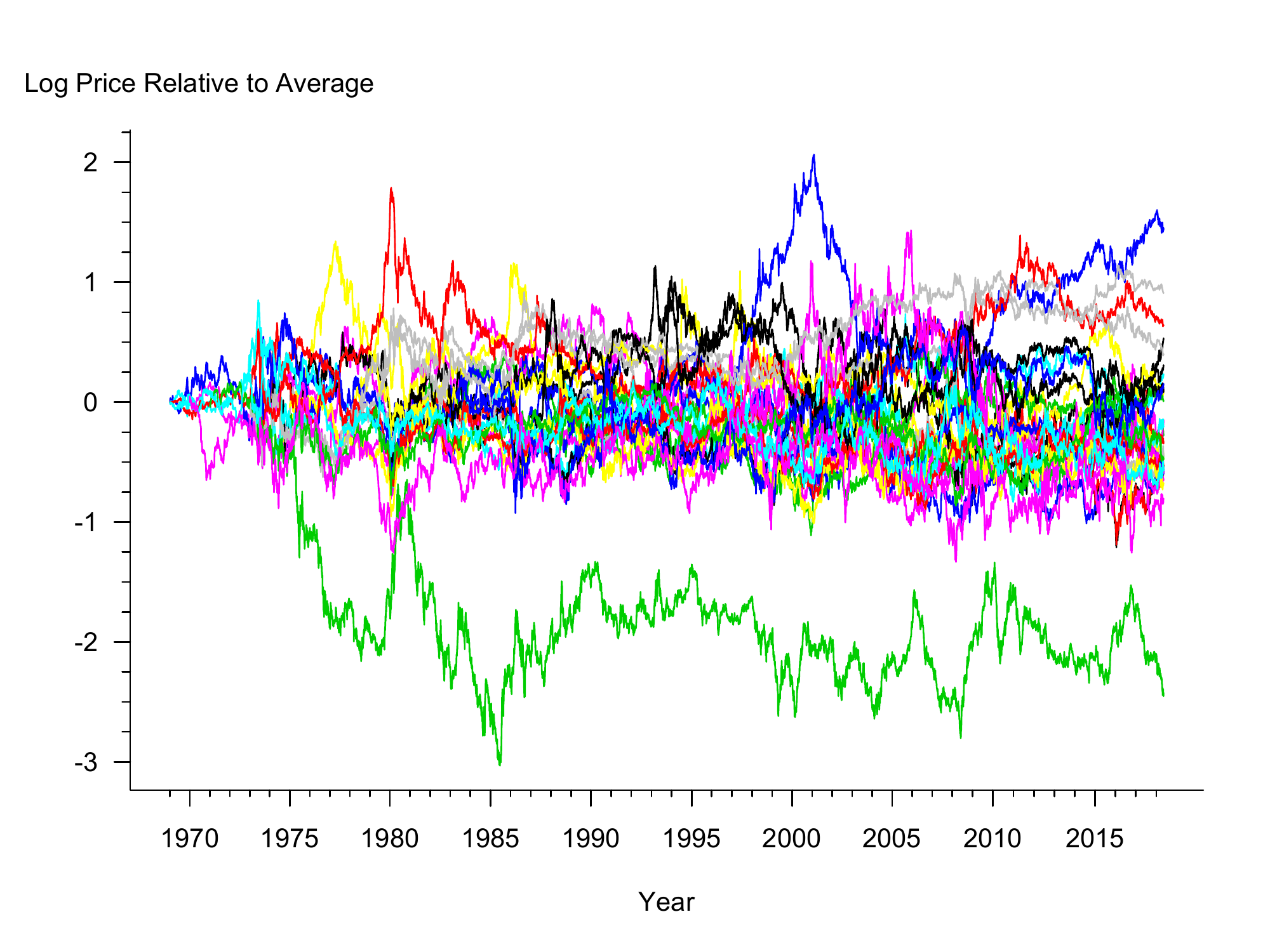}}}
\end{center}
\vspace{-24pt} \caption{Commodity prices relative to the average, 1969-2018.}
\label{relPricesFig}
\end{figure}

\begin{figure}[H]
\begin{center}
\vspace{-15pt}
\hspace{-20pt}\scalebox{0.63}{ {\includegraphics{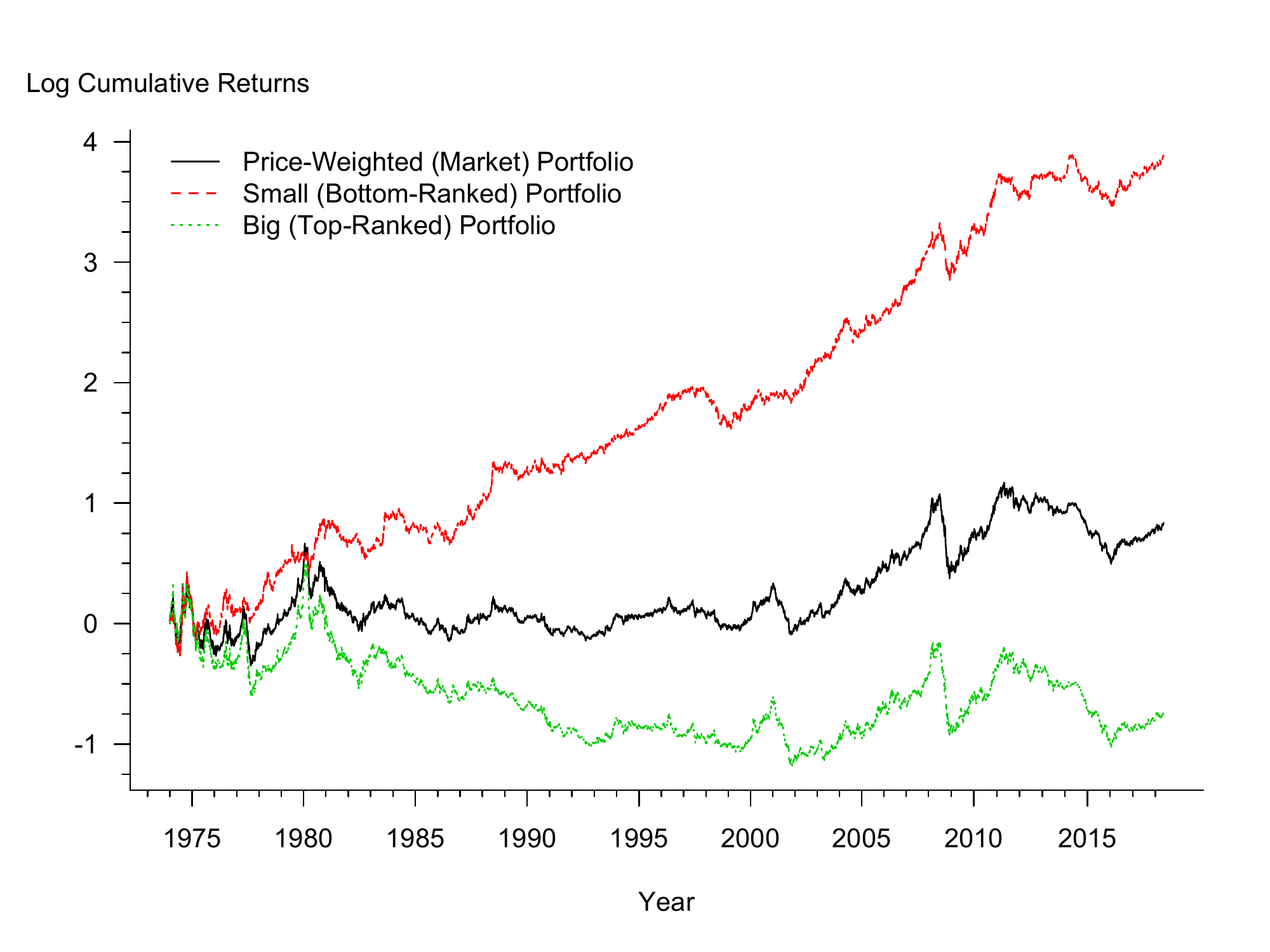}}}
\end{center}
\vspace{-24pt} \caption{Cumulative returns for price-weighted (market) portfolio, small (bottom-ranked), and big (top-ranked) portfolios, 1974-2018.}
\label{returnsFig}
\end{figure}

\begin{figure}[H]
\begin{center}
\vspace{-4pt}
\hspace{-20pt}\scalebox{0.63}{ {\includegraphics{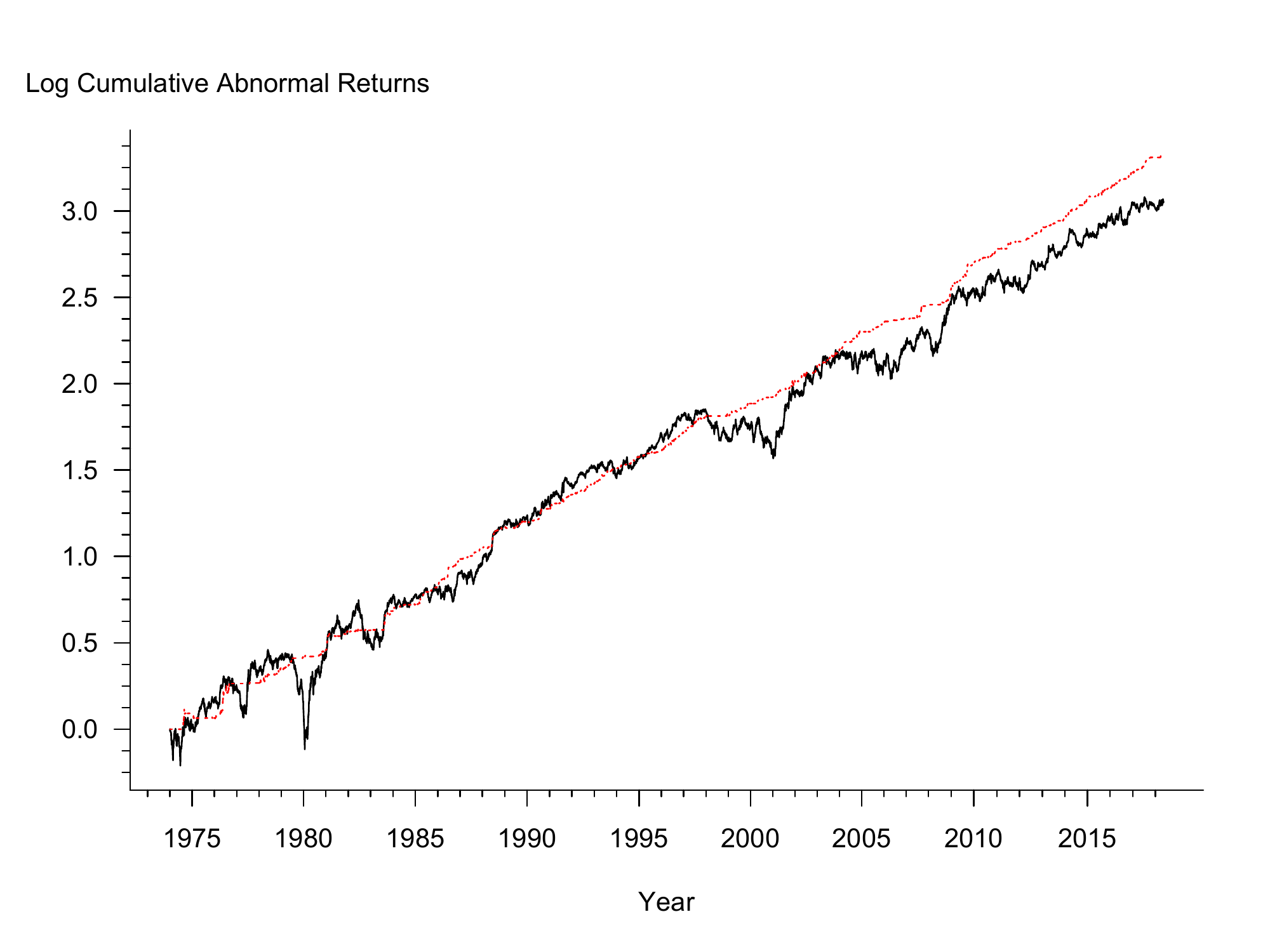}}}
\end{center}
\vspace{-24pt} \caption{Cumulative abnormal returns (solid black line) and rank crossovers (dashed red line) for small (bottom-ranked) portfolio, 1974-2018.}
\label{returnsSmallFig}
\end{figure}

\begin{figure}[H]
\begin{center}
\vspace{-15pt}
\hspace{-20pt}\scalebox{0.65}{ {\includegraphics{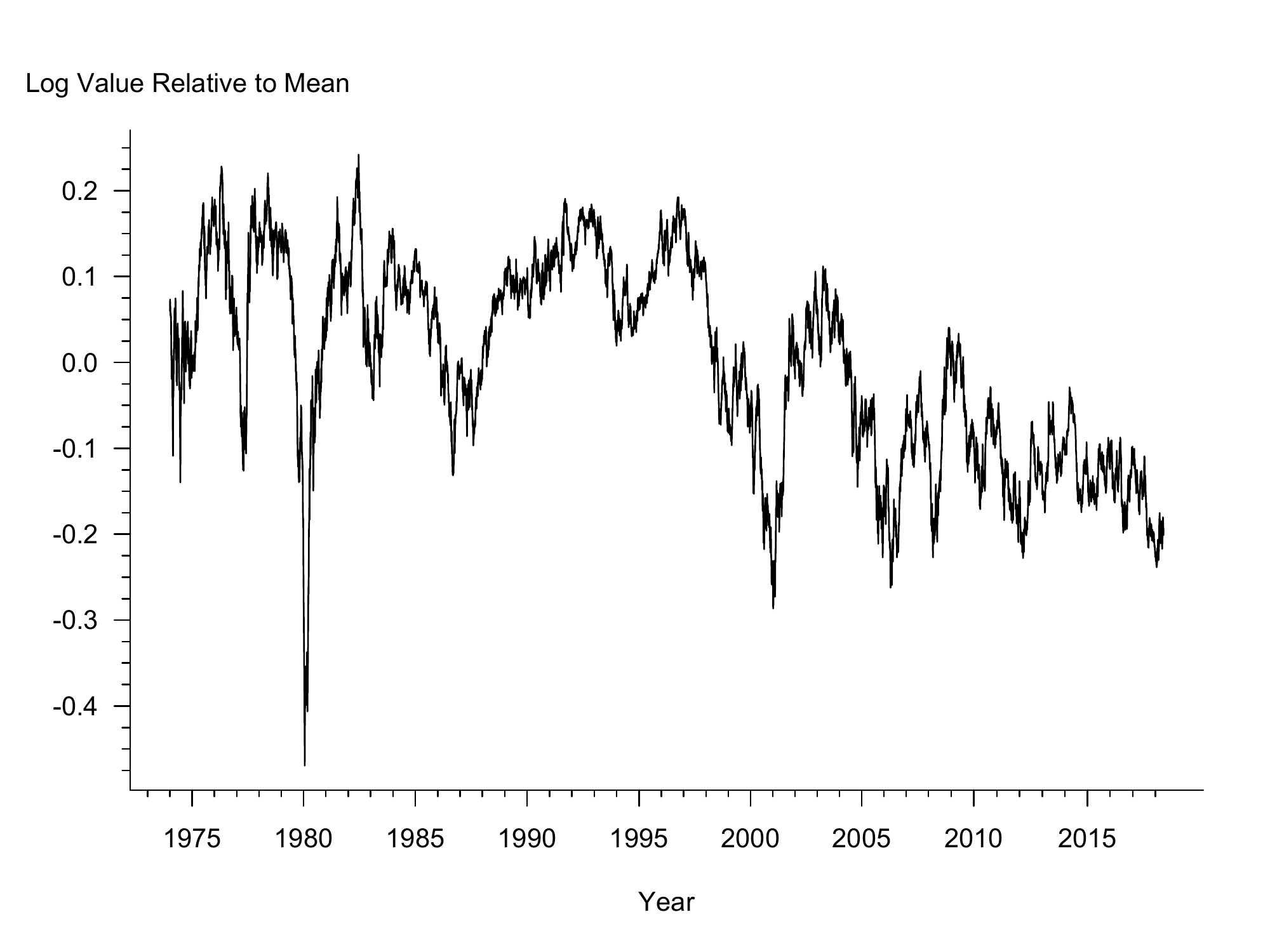}}}
\end{center}
\vspace{-24pt} \caption{Relative price of small (bottom-ranked) commodity futures, 1974-2018.}
\label{relPSmallFig}
\end{figure}

\begin{figure}[H]
\begin{center}
\vspace{-4pt}
\hspace{-20pt}\scalebox{0.65}{ {\includegraphics{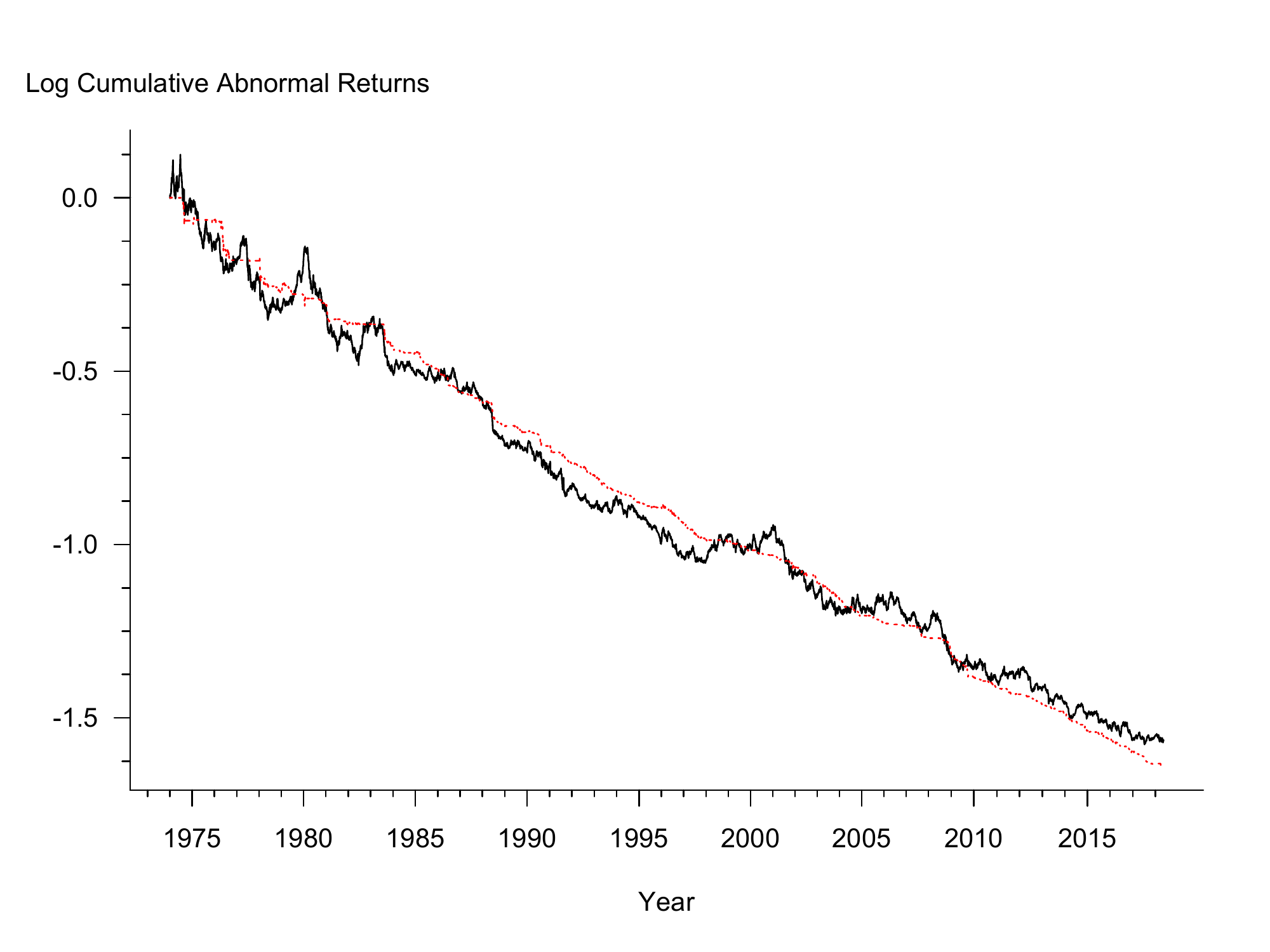}}}
\end{center}
\vspace{-24pt} \caption{Cumulative abnormal returns (solid black line) and rank crossovers (dashed red line) for big (top-ranked) portfolio, 1974-2018.}
\label{returnsBigFig}
\end{figure}

\begin{figure}[H]
\begin{center}
\vspace{-15pt}
\hspace{-20pt}\scalebox{0.65}{ {\includegraphics{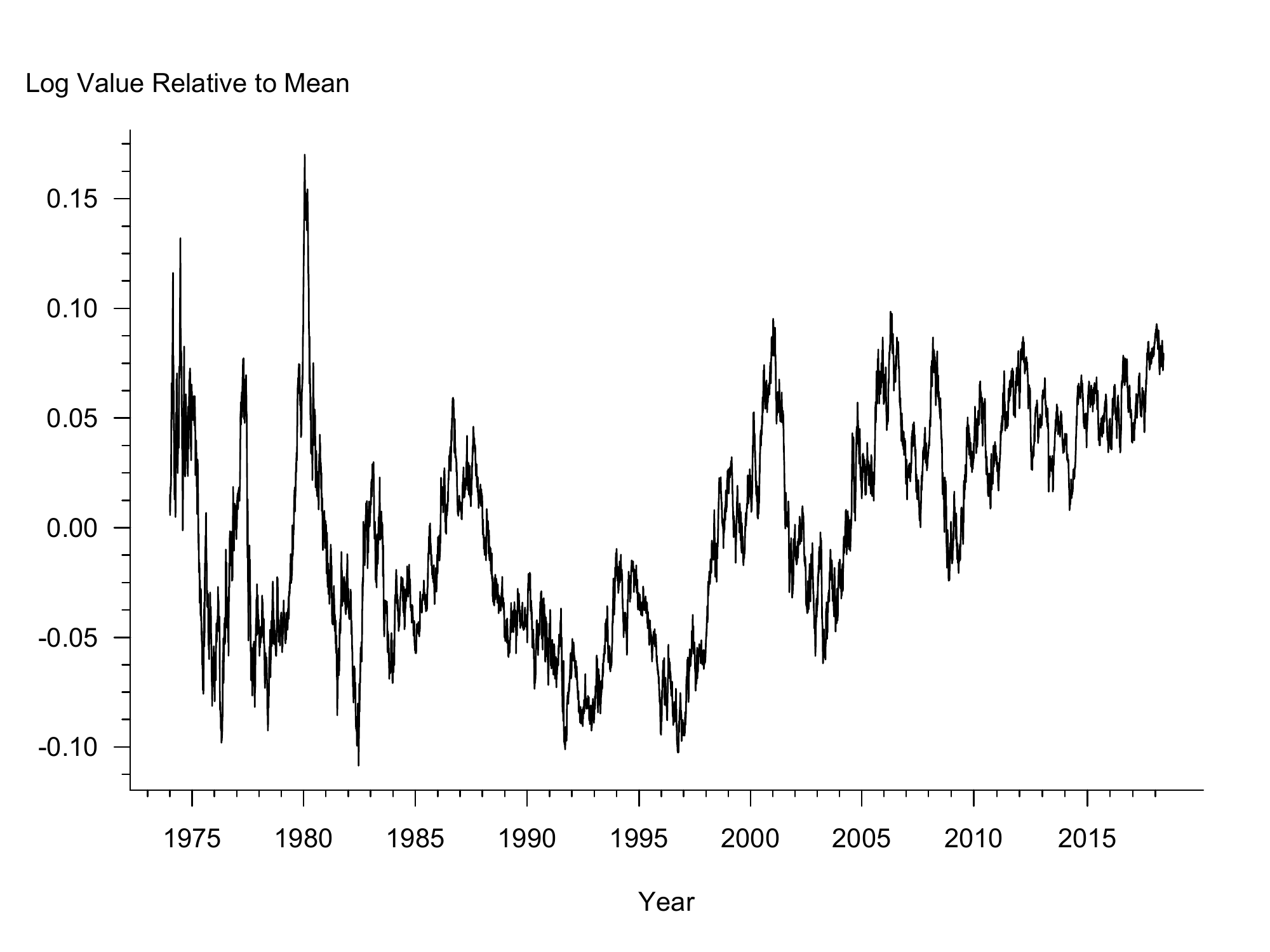}}}
\end{center}
\vspace{-24pt} \caption{Relative price of big (top-ranked) commodity futures, 1974-2018.}
\label{relPBigFig}
\end{figure}

\begin{figure}[H]
\begin{center}
\vspace{-4pt}
\hspace{-20pt}\scalebox{0.65}{ {\includegraphics{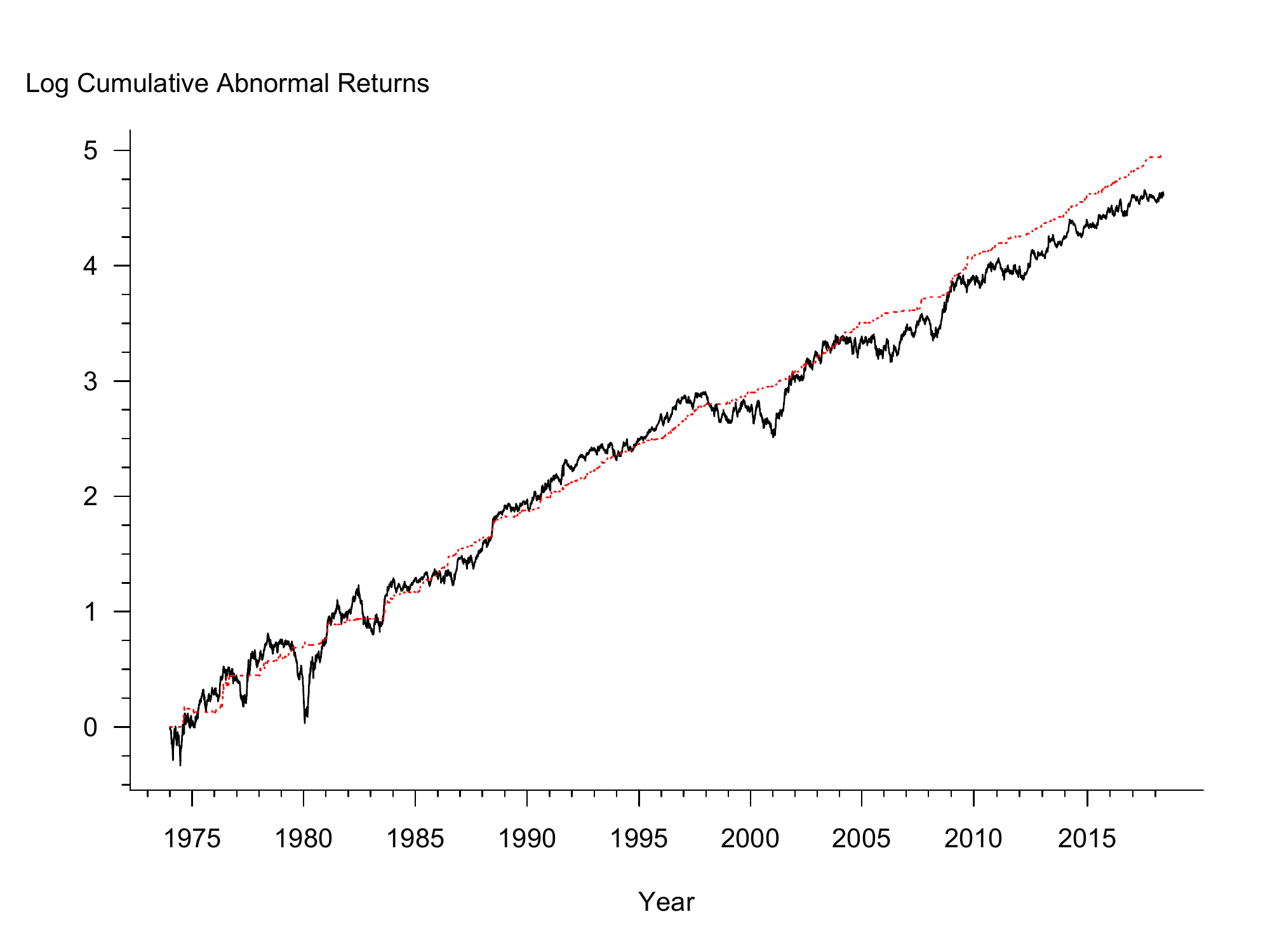}}}
\end{center}
\vspace{-24pt} \caption{Cumulative abnormal returns (solid black line) and rank crossovers (dashed red line) for the small portfolio relative to the big portfolio, 1974-2018.}
\label{returnsSmallBigFig}
\end{figure}

\begin{figure}[H]
\begin{center}
\vspace{-15pt}
\hspace{-20pt}\scalebox{0.62}{ {\includegraphics{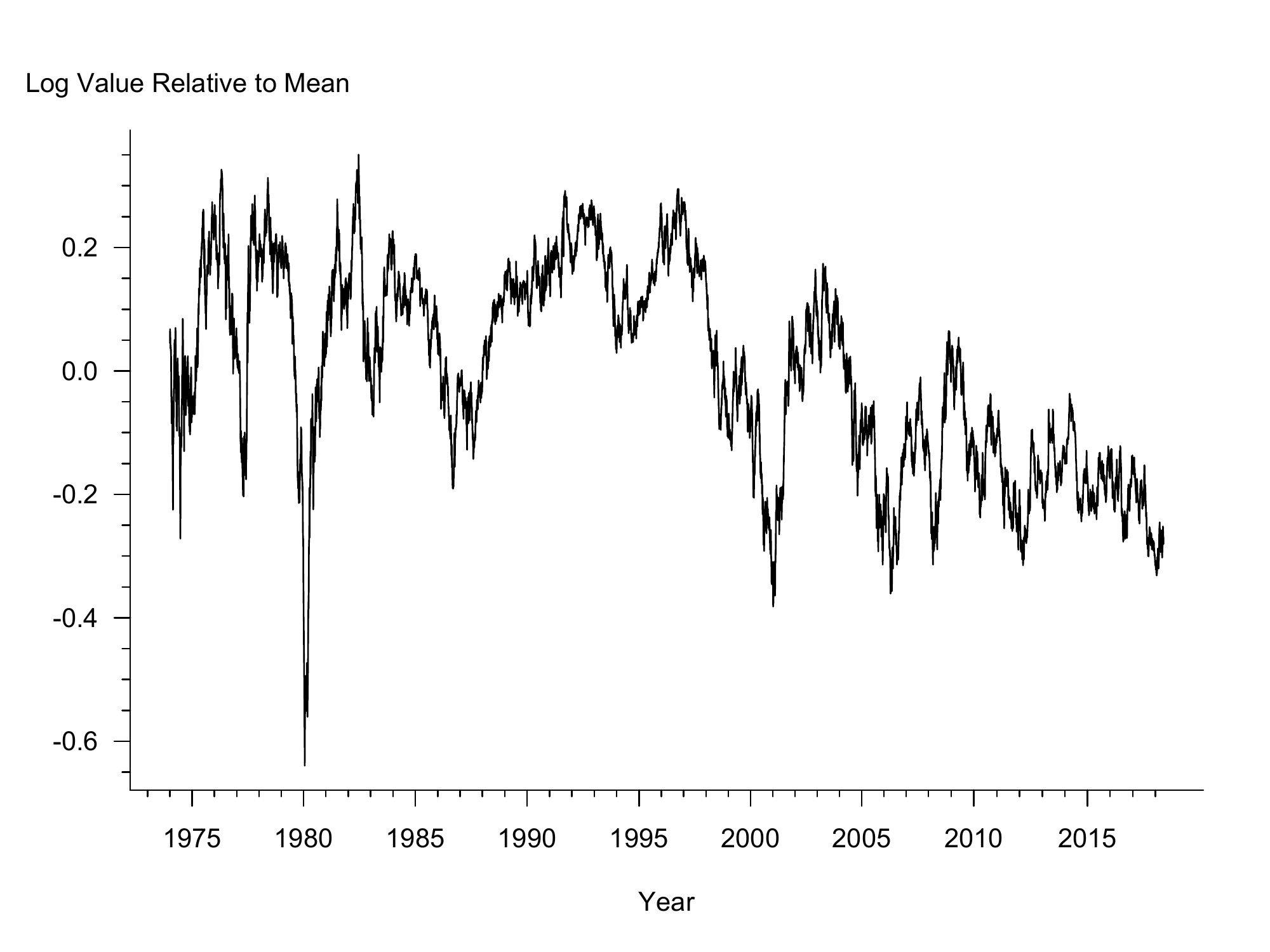}}}
\end{center}
\vspace{-24pt} \caption{Price of small (low-ranked) commodity futures relative to big (top-ranked) commodity futures, 1974-2018.}
\label{relPSmallBigFig}
\end{figure}

\begin{figure}[H]
\begin{center}
\vspace{-4pt}
\hspace{-20pt}\scalebox{0.62}{ {\includegraphics{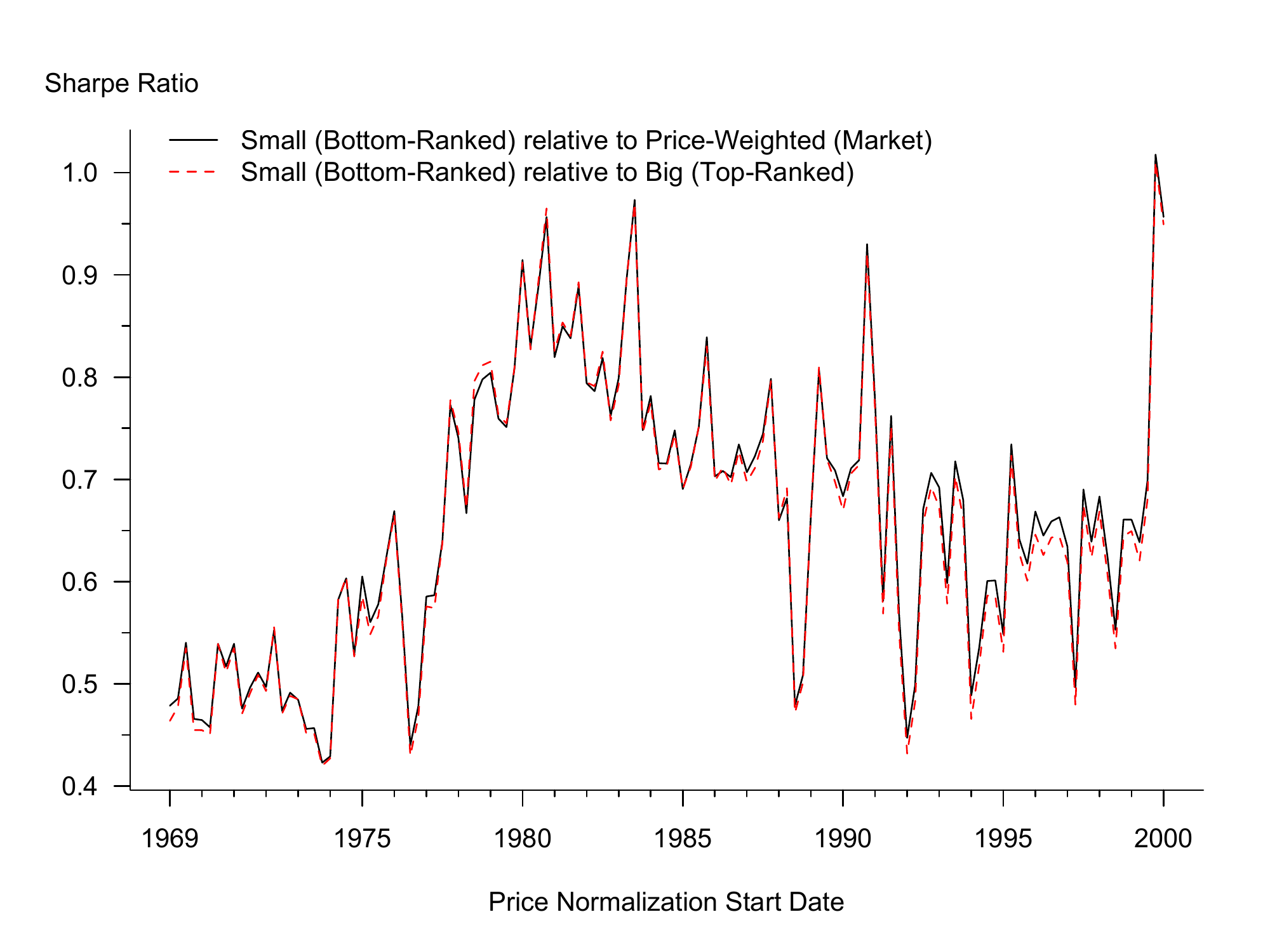}}}
\end{center}
\vspace{-24pt} \caption{Annualized sharpe ratio of monthly returns for small (bottom-ranked) portfolio relative to price-weighted (market) and big (top-ranked) portfolios for different price normalization start dates.}
\label{varyStartFig}
\end{figure}

\end{document}